\newcommand{\Exp}[1]{{\text{E}}[ \ensuremath{ #1 } ]  }
\newcommand{\Var}[1]{{\text{Var}}[ \ensuremath{ #1 } ]  }
\newcommand{\bs}[1]{{\boldsymbol #1}}
\newtheorem{theorem}{Theorem}[section]
\newtheorem{lemma}[theorem]{Lemma}
\newtheorem{proposition}{Proposition}[section]
\newtheorem*{lemma*}{Lemma}
\begin{document}

\title{Adaptive multigroup confidence intervals with constant coverage}
\author{Chaoyu Yu$^{1}$ and  Peter D. Hoff$^{2}$ \\
 $^1$Department of Biostatistics, University of Washington-Seattle \\
 $^2$Department of Statistical Science, Duke University}
\maketitle


\begin{abstract}
Confidence intervals for the means of  multiple normal populations
are often based on a  hierarchical
normal model.
While commonly used interval procedures 
based on such a model
have the nominal coverage rate
on average across a population of groups,
their actual coverage rate for a given group 
will be above or below the nominal rate,  
depending on 
the value of the group mean. 
Alternatively, a coverage rate  that is constant as a function of 
a group's mean 
can be simply achieved
by using
a standard $t$-interval, based on data only from that group. 
The standard $t$-interval, however, fails
to share information across the groups and is therefore
not adaptive to easily obtained information about the
distribution of group-specific means.

In this article we construct
confidence intervals that  
have a constant frequentist coverage rate 
and that 
make use of information about
across-group heterogeneity, resulting in 
constant-coverage 
intervals that 
are narrower 
than standard $t$-intervals 
on average across groups. 
Such intervals are constructed by inverting biased
tests for the mean of a normal population.
Given a prior distribution on the mean,
Bayes-optimal biased tests can be inverted
to form Bayes-optimal confidence intervals with
frequentist coverage that is constant as a function of the mean. 
In the context of multiple groups, the prior distribution is replaced 
by a model of across-group heterogeneity. The parameters 
for this model can 
be estimated using data from all of the groups, and 
used to obtain
confidence intervals 
with constant group-specific coverage 
that adapt to information about the
distribution of group means.

\smallskip

\noindent {\it Keywords:}
biased test, confidence region, 
hierarchical model, 
multilevel data, 
shrinkage. 
\end{abstract}


\section{Introduction}
A commonly used experimental design is the one-way layout,
in which a random sample $Y_{1,j},\ldots, Y_{n_j,j}$
is obtained from
each of
several related groups $j\in \{1,\ldots, p\}$.
The standard normal-theory model for data from such a design
is that $Y_{1,j},\ldots, Y_{n_j,j}\sim$ i.i.d.\ $N(\theta_j,\sigma^2)$, 
independently across groups. 
Inference for the $\theta_j$'s
typically proceeds in one of two ways.
The ``classical'' approach is to use the
unbiased sample mean $\bar y_j$ as an estimator of $\theta_j$,
and to construct a confidence interval for $\theta_j$
by inverting the appropriate uniformly most powerful
unbiased (UMPU) test, that is, constructing the standard $t$-interval.
Such an approach essentially makes inference
for each $\theta_j$ using only data from group $j$
(although a pooled-sample estimate of $\sigma^2$ is often used).
The estimator of each $\theta_j$ is unbiased, and 
the confidence interval for each $\theta_j$ has the 
desired coverage rate.

An alternative approach is to utilize data from all of the groups
to infer each individual $\theta_j$. This is typically done by invoking
a hierarchical model, that is, 
a statistical model that describes the heterogeneity across the groups.
The standard one-way random effects model posits that 
$\theta_1,\ldots, \theta_p$ are a random sample from a normal
population, so that
$\theta_1,\ldots, \theta_p \sim $ i.i.d.\ $N(\mu,\tau^2)$.
In this case, shrinkage estimators
of the form
\[
\hat \theta_j  = \frac{ \hat \mu/\hat \tau^2 + \bar y_j n_j/\hat \sigma^2}
                      { 1/\hat \tau^2 + n_j/\hat \sigma^2 } 
\]
are often used, 
where $(\hat \mu, \hat \tau^2, \hat \sigma^2)$ are estimated using data 
from all of the groups. 
This estimator has a lower variance than the sample
mean, but is generally biased. 
Confidence intervals based on these shrinkage estimators
are often derived from the hierarchical model:
Letting $\tilde \theta_j $ be defined as
\[
\tilde\theta_j  = \frac{  \mu/ \tau^2 + \bar y_j n_j/ \sigma^2}
                      { 1/\tau^2 + n_j/ \sigma^2 }, 
\]
then
$\Exp{(\tilde \theta_j - \theta_j)^2} = (1/\tau^2 + n_j/\sigma^2)^{-1}$,
where the expectation integrates over both the normal model
for the observed data and the normal model representing heterogeneity
across the groups. 
This quantity is also the conditional variance of $\theta_j$ 
given data from group $j$, 
which suggests an empirical Bayes posterior interval for
$\theta_j$ of the form
$ \hat \theta_j  \pm t_{1-\alpha/2 } /
  \sqrt{1/\hat \tau^2 +n_j/\hat \sigma^2 }$, 
where $t_\gamma$ denotes the $\gamma$-quantile of the appropriate 
$t$-distribution. 
Compared to the classical $t$-interval
$\bar y_j \pm t_{1-\alpha/2} \sqrt{\hat\sigma^2/n_j}$,
this interval is narrower by a factor
of $\sqrt{ \hat \tau^2/(\hat \tau^2 + \hat\sigma^2/n ) }$.
However, its coverage rate is not $1-\alpha$
for all groups.
While  the rate tends to be near the
nominal level
on average across all groups,
the rate for a specific group $j$ will
depend on the value of $\theta_j$.
Specifically, the coverage rate will be
too low for  $\theta_j$'s far from
the overall average $\theta$-value,
and too high for $\theta_j$'s that are close to this
average (see, for example, \citet[Section 4.8]{snijders_bosker_2012}).
Other types of empirical Bayes posterior intervals have been developed by  \citet{Morris_1983}, \citet{Laird_1987}, \citet{He_1992} and  \citet{hwang_2009}. 
Like the interval obtained from the hierarchical normal
model,
these intervals are narrower than the standard $t$-interval but
fail to have the target coverage rate for each group.

In the related problem of confidence region construction for a vector
of normal means, several authors have pursued
procedures that dominate those based on UMPU test inversion
\citep{berger_1980,casella_hwang_1986}.
In particular, \citet{Tseng_1997} obtain a modified empirical Bayes
confidence region that has exact frequentist coverage but is also uniformly
smaller than the usual procedure.
In this article we pursue similar results for the problem of
multigroup confidence interval construction.
Specifically, we develop a confidence interval procedure
that has the desired coverage rate for every group,
but also adapts to the heterogeneity across groups,
thereby achieving shorter confidence intervals
than the classical  approach on average across groups.
More precisely, 
our goal is to obtain 
a multigroup confidence interval procedure  
$\{ C^1(\bs Y),\ldots, C^p(\bs Y)\}$,  based on data $\bs Y$ from 
all of the groups,
that attains the target
frequentist coverage rate for each group and 
all values of $\bs \theta=(\theta_1,\ldots,\theta_p)$, 
so that
\begin{equation}  
\Pr( \theta_j \in C^j(\bs Y) | \bs \theta) = 1-\alpha \ \
\forall \, \bs\theta \in \mathbb R^p,  \ \forall \, j\in \{1,\ldots, p\}, 
\label{eqn:coverprop}
\end{equation}
and is also more efficient than the standard $t$-interval
on average across groups, so that
\begin{equation}
\Exp{ | C^j(\bs Y)|  }   <  2 t_{1-\alpha/2 }, 
\label{eqn:shrinkprop}
\end{equation}
where $|C|$ denotes the width of an interval $C$, and 
the expectation is with respect to an unknown distribution 
describing the  across-group heterogeneity of the $\theta_j$'s.
The interval procedures we propose satisfy
the constant coverage property (\ref{eqn:coverprop}) exactly.
Property (\ref{eqn:shrinkprop}) will hold approximately, 
depending on what the across-group distribution is and how well it
is estimated. 

The intuition behind our procedure is as follows:
While the standard $t$-interval for a single group is
uniformly most accurate
among unbiased interval procedures (UMAU), it is not uniformly
most accurate  among all procedures. We define 
classes of biased hypothesis tests for a normal mean, inversion of
which generates $1-\alpha$ frequentist
$t$-intervals that are more accurate
than the standard UMAU $t$-interval for some values of the parameter space, but
less accurate elsewhere. The class of tests  can be chosen to minimize
an expected width with respect to a prior distribution for the population mean,
yielding the confidence interval procedure (CIP)
that is Bayes-optimal
among all CIPs that have $1-\alpha$ frequentist coverage.   
We call the Bayes-optimal frequentist procedure a 
``frequentist assisted by Bayes'' (FAB) interval 
procedure. 
In a multigroup setting, the ``prior'' for the population mean 
is replaced by a model for across-group heterogeneity. 
The parameters in this model can be estimated using data 
from all of the groups, yielding an empirical FAB
confidence interval procedure
that maintains a coverage rate 
that is constant as a function of the group means. 

Several authors have studied constant coverage CIPs in the single-group 
case that differ from the UMAU procedure. 
Such procedures generally make use of some sort 
of prior knowledge about the population mean.  
In particular, our work builds upon that of 
\citet{pratt_1963}, 
who studied the Bayes-optimal $z$-interval for the 
case that $\sigma^2$ is known. 
Other related work includes
\citet{farchione_kabaila_2008}  and 
\citet{kabaila_dilshani_2014}, who developed procedures 
that make use of non-probabilistic prior knowledge that 
the mean is near a pre-specified
parameter value (e.g.\ zero). Their procedures 
have shorter expected widths near 
this special value, 
but revert to the UMAU procedures
when the data are far from this point. 
\citet{evans_hansen_stark_2005} obtained minimax 
CIPs for cases where  prior knowledge takes the form of 
bounds on the parameter values. 

The FAB $t$-interval we construct is a straightforward extension of 
the Bayes-optimal $z$-interval developed by  \citet{pratt_1963}. 
In the next section, we
review the 
FAB $z$-interval of Pratt 
and extend the idea to construct a FAB $t$-interval  for the 
case that $\sigma^2$ is unknown. 
In Section 3 we use the FAB $t$-interval procedure to obtain 
group-specific confidence intervals 
that have constant  
coverage rates for all groups and all values of $\bs\theta$, and 
are also asymptotically optimal as the number of groups increases. 
In Section 4 we illustrate the use of the FAB interval procedure
with an 
example dataset, and compare its performance to that of 
the UMAU and empirical Bayes procedures
often used for multigroup data. 
A discussion follows in Section 5. 
Proofs are given in an appendix.

\section{FAB confidence intervals}

Consider a model for a random variable $Y$ that is indexed by a single 
unknown scalar parameter $\theta\in \mathbb R$. 
A $1-\alpha$ confidence region procedure (CRP) for $\theta$ based on $Y$ 
is a set-valued function 
$C(y)$ such that $\Pr( \theta \in  C(Y) | \theta) =1-\alpha$ for 
all $\theta\in \mathbb R$. 
As is well-known, a CRP
can be constructed by inversion of 
a collection of hypothesis tests. 
For each $\theta\in \mathbb R$, let $A(\theta)$ be the acceptance region 
of an $\alpha$-level test of $H_\theta: Y \sim P_\theta$ 
versus $K_\theta: Y\sim P_{\theta'}, \theta'\neq \theta$. 
Then $C(y) = \{ \theta: y \in A(\theta)\}$ 
is a $1-\alpha$ CRP.
We take the risk $R(\theta,C)$ of a  $1-\alpha$ CRP   
to be its expected Lebesgue measure
\[ R(\theta,C) = \int  \int 1(y \in A(\theta')) \  d\theta'   \ P_\theta(dy).  \] 
For our model of current interest, 
$Y\sim N(\theta,\sigma^2)$ with $\sigma^2$ known, 
there does not exist a  
CRP that uniformly minimizes this risk over all values of $\theta$.  
However,  there exist optimal CRPs  within 
certain subclasses of procedures. For example, the standard 
$z$-interval, given by $C_z(y) =( y+\sigma z_{\alpha/2} , y+\sigma z_{1-\alpha/2})$, 
minimizes the risk among all unbiased CRPs derived by inversion of 
unbiased tests of $H_\theta$ versus $K_\theta$, and so is 
the uniformly most accurate unbiased (UMAU) CRP. 

That the interval is  unbiased means 
$\Pr( \theta' \in C_{z}(Y) |\theta) \leq 1-\alpha$ for all $\theta$ and
$\theta'$, and that it is UMAU means 
$R(\theta,C_{z}) = 2\sigma  z_{1-\alpha/2}\leq  R(\theta,\tilde C)$
for any other unbiased CRP $\tilde C$ and every $\theta$.
But
while $C_{z}$ is best among unbiased CRPs,
the lack of a UMP test of $H_\theta$ versus $K_\theta$  means
there will be CRPs corresponding
to collections of biased level-$\alpha$ tests that
have lower risks than
$C_{z}$
for \emph{some}
values of $\theta$.  This suggests that if we have
prior information that
$\theta$ is likely to be near some value $\mu$,
we may
be willing to incur 
larger risks for $\theta$-values far from $\mu$ 
in exchange for small risks near $\mu$. 
With this in mind,  we consider  
the Bayes risk
$R(\pi, C) =  \int R(\theta,C)\, \pi(d\theta)$, 
where $\pi$ is a prior distribution that describes how close 
$\theta$ is likely to be to $\mu$. 
This Bayes risk may be related to the
marginal (Bayes) probability of accepting $H_\theta$ as follows:
\begin{align*}
R(\pi,C) = \int R(\theta,C)\, \pi(\theta)d\theta &= 
              \int \int \int  1(y\in A(\theta')) \, d \theta' \, P_\theta(dy) 
    \,  \pi(d\theta)    \\
&=\int \int \int  1(y\in A(\theta')) P_\theta(dy) \pi(d\theta)  \,    
   d\theta' \\
&= \int \Pr( Y\in A(\theta') ) \, d\theta'. 
\end{align*}
The Bayes-optimal $1-\alpha$ CRP is obtained by 
choosing $A(\theta)$ to minimize $\Pr(y\in A(\theta))$ for 
each $\theta\in\mathbb R$, or equivalently, 
to maximize the probability that $H_\theta$ is rejected 
under the prior predictive (marginal) distribution $P_\pi$ for $Y$ 
that is induced by $\pi$. 
This means that the optimal $A(\theta)$ is the acceptance region 
of the most powerful test of 
the simple hypothesis $H_\theta: Y \sim P_\theta$  
versus the simple hypothesis $K_\pi: Y \sim P_\pi$. 
The confidence region obtained by inversion of this 
collection of acceptance regions is Bayes optimal 
among all CRPs having $1-\alpha$ frequentist coverage. 
We describe such a procedure as ``frequentist, assisted by Bayes'', 
or FAB. 

Using this logic, 
\citet{pratt_1963} 
obtained and studied the Bayes-optimal
optimal CRP for the model $Y\sim N(\theta,\sigma^2)$  with $\sigma^2$ known 
and
prior distribution $\theta\sim N(\mu,\tau^2)$. 
Under this 
distribution for $\theta$, the marginal
distribution for $Y$ is  $N(\mu,\tau^2+\sigma^2)$. 
The Bayes-optimal 
CRP is therefore given by inverting acceptance regions $A(\theta)$ 
of the most powerful tests of 
$H_\theta: Y\sim N(\theta,\sigma^2)$ versus $K_\pi: Y\sim N(\mu,\tau^2+\sigma^2)$
for each $\theta$. 
This optimal CRP is an interval, the endpoints of which may 
be obtained by solving two nonlinear equations. 
We refer to this CRP   as
Pratt's FAB $z$-interval.

The procedure used to obtain the FAB $z$-interval, and
the form used by Pratt, are not immediately extendable
to the more realistic situation in which 
$Y_1\,\ldots, Y_n \sim \text{i.i.d} \ 
 N(\theta, \sigma^2)$ where both $\theta$ and $\sigma^2$
are unknown. The primary reason is that in this case
the Bayes-optimal acceptance region depends on
the unknown value of $\sigma^2$, or to put it another
way, the null hypothesis $H_\theta$ is composite.
However, the situation is not too difficult to
remedy: Below we re-express Pratt's $z$-interval
in terms of a function that controls where 
the type I error is ``spent''. 
We then define a class of $t$-intervals based on such 
functions, from which we obtain the Bayes-optimal $t$-interval
for the  case that $\sigma^2$ is unknown.

\subsection{The Bayes-optimal $w$-function} 
For the model $\{Y\sim N(\theta,\sigma^2), \ \theta\in \mathbb R\}$ 
 we may limit consideration 
of CRPs to those obtained by inverting collections of two-sided tests:
\begin{lemma}
Suppose the distribution of $Y$ belongs to a one-parameter exponential family with parameter $\theta \in \mathbb{R}$. For any confidence region procedure $\tilde C$
there exists a procedure $C$, obtained by inverting a collection 
of two-sided tests, 
that has the same coverage as $\tilde C$ and a risk less than or equal 
to that of $\tilde C$.
\label{lemma:twoside}
\end{lemma}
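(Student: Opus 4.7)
The plan is to reduce by sufficiency and then rearrange $\tilde C$ into an interval-valued procedure whose pointwise expected length does not exceed that of $\tilde C$.

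First, by the sufficiency of the natural statistic $T$ in the one-parameter exponential family, any acceptance region $\tilde A(\theta')$ can be replaced by a (possibly randomized) $T$-measurable region with the same coverage at every $\theta'$ and the same risk at every $\theta$; this is a standard Rao--Blackwell-type argument applied to the indicators $1_{\tilde A(\theta')}$. Henceforth I assume $\tilde A(\theta') \subseteq \mathbb R$ is interpreted in $T$-space, and write $f_\theta$ for the density of $T$ under $P_\theta$.

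Second, by Fubini,
\[
R(\theta, \tilde C) \;=\; \int P_\theta\bigl(T \in \tilde A(\theta')\bigr)\, d\theta' \;=\; \int f_\theta(t)\, |\tilde C(t)|\, dt,
\]
where $\tilde C(t) = \{\theta' : t \in \tilde A(\theta')\}$ is the realized confidence set when $T=t$. It therefore suffices to produce an interval-valued CRP $C^*(t) = [L(t), U(t)]$ with $L,U$ nondecreasing in $t$ (so that the dual acceptance regions $A^*(\theta') = \{t : L(t) \le \theta' \le U(t)\}$ are intervals, i.e., $C^*$ is obtained by inverting two-sided tests), with pointwise length domination $|C^*(t)| \le |\tilde C(t)|$, and with preserved coverage $P_{\theta'}(\theta' \in C^*(T)) = 1-\alpha$ for every $\theta'$. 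Together these three properties immediately yield $R(\theta, C^*) \le R(\theta, \tilde C)$ for every $\theta$ via the Fubini expression above.

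The main obstacle is constructing $L$ and $U$ so that the pointwise length condition, the global coverage identities (a separate equation for each $\theta'$), and the monotonicity condition all hold simultaneously. The monotone likelihood ratio of the exponential family is the key structural tool: it both forces the dual regions $A^*(\theta')$ obtained from monotone endpoints to be intervals, and it constrains how the slices $\tilde C(t)$ vary with $t$, so that a length-preserving collapse of each slice to an interval can be coordinated across $t$. Concretely, I would parametrize intervals of prescribed length $|\tilde C(t)|$ by their lower endpoint and then solve for $L(\cdot)$ via a functional equation enforcing $P_{\theta'}(L(T) \le \theta' \le L(T) + |\tilde C(T)|) = 1-\alpha$, with MLR guaranteeing both solvability and monotonicity of the solution. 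Pinning down this construction rigorously (and handling atoms or degenerate slices where $|\tilde C(t)|$ is not continuous) is where the technical work concentrates.
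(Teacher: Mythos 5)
Your Fubini identity $R(\theta,\tilde C)=\int \Pr_\theta\bigl(T\in\tilde A(\theta')\bigr)\,d\theta'=\int f_\theta(t)\,|\tilde C(t)|\,dt$ is the right bridge, but after that you turn the problem in the wrong direction, and the central step of your plan is left unestablished. You propose to fix the data value $t$ and collapse each slice $\tilde C(t)$ to an interval of the same length, repositioned by a single monotone function $L$ chosen to satisfy the continuum of exact-coverage constraints $\Pr_{\theta'}\bigl(L(T)\le\theta'\le L(T)+|\tilde C(T)|\bigr)=1-\alpha$ for every $\theta'$. Nothing in the proposal shows that this functional equation has a solution, let alone a monotone one; ``MLR guarantees solvability and monotonicity'' is asserted, not argued, and you concede this is where all the work lies. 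Worse, the target you set yourself --- pointwise (in $t$) length domination together with exact coverage at every $\theta'$ and monotone endpoints --- is strictly stronger than what the lemma asks for (domination of the \emph{expected} length at each $\theta$), and it is stronger than what the standard two-sided construction actually delivers: inverting improved two-sided tests can lengthen the realized interval at particular data values while shortening it on average. So even if a rigorous argument exists along your lines, it would have to overcome an over-constrained rearrangement problem that the lemma does not require you to solve.

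The intended (and much shorter) route improves the procedure in the $\theta'$-direction rather than the $t$-direction: for each null value $\theta'$, the acceptance region $\tilde A(\theta')$ is a level-preserving test of the point null $H_{\theta'}$, and in a one-parameter exponential family there exists a two-sided test with exactly the same size and power at least as great at every alternative (Theorem 2 of Section 5.3 of Ferguson, 1967). Replacing each $\tilde A(\theta')$ by the acceptance region of this two-sided test preserves the coverage $\theta$-by-$\theta$ (sizes match), and by the very identity you wrote down, the probability of covering each false value $\theta'$ under $P_\theta$ equals one minus the power at $\theta$ of the test of $H_{\theta'}$, so it can only decrease; integrating over $\theta'$ gives $R(\theta,C)\le R(\theta,\tilde C)$ for every $\theta$. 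No coordination across different $\theta'$ (your monotonicity of $L,U$) is needed for this lemma --- interval-valuedness of the resulting region is handled separately in the paper (Lemmas on nondecreasing $w$-functions), not as part of this risk-domination statement. As a minor point, your initial Rao--Blackwell reduction to the sufficient statistic is unnecessary here, since the lemma already takes $Y$ itself to lie in a one-parameter exponential family.
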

For the normal model of interest, 
an interval $A(\theta)=(\theta-\sigma u ,\theta-\sigma l)$ will be
the acceptance region of a two-sided level-$\alpha$ test if and only if
$u$ and $l$ satisfy 
$\Phi(u) - \Phi(l) = 1-\alpha$, or equivalently, 
if $u=z_{1-\alpha w}$ and $l = z_{\alpha (1-w)}$ 
for some value of $w\in (0,1)$, 
where $\Phi$ is the standard normal CDF and 
$z_{\gamma} = \Phi^{-1}(\gamma)$. 
It is important to note that the value of  $w$,
and thus $l$ and $u$,  can vary with $\theta$
and still yield a $1-\alpha$ confidence region:
Let $w:\mathbb R  \rightarrow (0,1)$ and define
$A_w(\theta)  = (\theta - \sigma z_{1-\alpha w(\theta)}, \theta-\sigma z_{\alpha(1-w(\theta))} )$.
Then for each $\theta$, $A_w(\theta)$ is  the
acceptance region of
a level-$\alpha$ test of
$H_\theta$
versus $K_\theta$. Inversion of $A_w(\theta)$
yields a $1-\alpha$ CRP
given by
\begin{equation}
 C_w(y) = \{ \theta : y+\sigma z_{\alpha (1-w(\theta))} < 
             \theta < y+\sigma z_{1-\alpha w(\theta)}   \}. 
 \label{eq:C}
\end{equation} 
This confidence region can be seen as a generalization of 
the usual UMAU $z$-interval, 
given by 
$
C_{1/2}(y) = \{ \theta : y+\sigma z_{\alpha /2 } < 
                \theta < y+\sigma z_{1-\alpha/2}   \}$, 
 corresponding to a constant $w$-function of 
$w(\theta)=1/2$.
Given a prior distribution for $\theta$, the Bayes-optimal 
$w$-function corresponds to the Bayes-optimal CRP. 
For the prior distribution $\theta\sim N(\mu,\tau^2)$ considered by Pratt,   
the optimal 
$w$-function  depends on $\psi=(\mu,\tau^2,\sigma^2)$ and 
is given as follows:
\begin{proposition}
Let $Y\sim N(\theta,\sigma^2)$,  $\theta\sim N(\mu,\tau^2)$  and  let
$w:\mathbb R \rightarrow (0,1)$. Then
$R(\psi , C_{w_\psi} ) \leq  R(\psi ,C_w)$
where $w_{\psi}(\theta)$ is given by 
$ w_\psi(\theta) = g^{-1} ( 2\sigma(\theta-\mu)/\tau^2 ) $
with $g(w) =  \Phi^{-1}(\alpha w) - \Phi^{-1}(\alpha(1-w) )$.  
The function $w_\psi(\theta)$ is a continuous strictly 
increasing function of $\theta$. 
\label{prop:C}
\end{proposition}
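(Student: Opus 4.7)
The plan is to combine the Bayes-risk identity
$R(\pi, C) = \int \Pr(Y \in A(\theta'))\, d\theta'$ derived earlier in the text
(with $Y$ drawn from the marginal $P_\pi = N(\mu, s^2)$, where
$s^2 = \sigma^2 + \tau^2$) with Lemma~\ref{lemma:twoside}, which lets me
restrict to CRPs of the form $C_w$ built from the two-sided acceptance intervals
$A_w(\theta) = (\theta - \sigma z_{1-\alpha w(\theta)}, \theta - \sigma z_{\alpha(1-w(\theta))})$.
Since the integrand depends on $\theta'$ only through $w(\theta')$, the Bayes-optimal
$w$-function must minimize $f_\theta(w) := \Pr(Y \in A_w(\theta))$ pointwise in
$\theta$, reducing the whole question to a one-variable calculus problem.

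I would then write
$f_\theta(w) = \Phi((\theta - \mu - \sigma a)/s) - \Phi((\theta - \mu - \sigma b)/s)$
with $a(w) = \Phi^{-1}(\alpha(1-w))$ and $b(w) = \Phi^{-1}(1-\alpha w)$, and
differentiate using $da/dw = -\alpha/\phi(a)$ and $db/dw = -\alpha/\phi(b)$.
The stationarity condition $f_\theta'(w) = 0$ becomes
$\phi(h - \rho a)/\phi(a) = \phi(h - \rho b)/\phi(b)$
where $h = (\theta - \mu)/s$ and $\rho = \sigma/s$.
Taking logs and using $1 - \rho^2 = \tau^2/s^2$, this collapses to the
linear identity $2\sigma(\theta - \mu)/\tau^2 = -(a + b)$; exploiting the
symmetry $\Phi^{-1}(1 - q) = -\Phi^{-1}(q)$ rewrites $-(a+b)$ as
$\Phi^{-1}(\alpha w) - \Phi^{-1}(\alpha(1-w)) = g(w)$, yielding
$w_\psi(\theta) = g^{-1}(2\sigma(\theta - \mu)/\tau^2)$.

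To show this stationary point is the global minimum rather than a saddle or
maximum, I would invoke Neyman--Pearson: among all size-$\alpha$ tests of
$H_\theta: Y \sim N(\theta, \sigma^2)$ against the simple alternative
$K_\pi: Y \sim N(\mu, s^2)$, the MP test minimizes the marginal acceptance
probability $\Pr(Y \in A(\theta))$ under $K_\pi$. Because $s^2 > \sigma^2$,
the log-likelihood ratio is a convex quadratic in $y$, so the MP rejection
region is the complement of an interval; hence the MP acceptance region is
itself an interval, which fits the $w$-parametrization for some
$w^* \in (0,1)$. Uniqueness of the FOC root then forces $w^* = w_\psi(\theta)$.
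Continuity and strict monotonicity of $w_\psi$ follow from
$g'(w) = \alpha/\phi(\Phi^{-1}(\alpha w)) + \alpha/\phi(\Phi^{-1}(\alpha(1-w))) > 0$
together with $g(0^+) = -\infty$ and $g(1^-) = +\infty$, making $g$ a
homeomorphism $(0,1) \to \mathbb{R}$.

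The main obstacle I expect is precisely the minimum-vs-saddle verification:
without Neyman--Pearson, a direct second-derivative test or boundary analysis
is awkward, because the length of $A_w(\theta)$ diverges at both $w \to 0^+$
and $w \to 1^-$, so $f_\theta(w)$ does not obviously decrease monotonically
toward the critical point from either side. The Neyman--Pearson route
side-steps this by identifying the global minimizer intrinsically and then
matching it to the unique FOC solution.
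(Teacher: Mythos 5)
Your proposal is correct, and its computational core is the same as the paper's: reduce the Bayes risk to minimizing, separately for each $\theta$, the marginal acceptance probability over $w$; differentiate in $w$; and observe that the stationarity condition collapses (because the marginal variance $\sigma^2+\tau^2$ exceeds $\sigma^2$, so the quadratic terms cancel exactly as you describe) to $g(w)=2\sigma(\theta-\mu)/\tau^2$, with continuity and strict monotonicity of $w_\psi$ coming from $g$ being an increasing homeomorphism of $(0,1)$ onto $\mathbb{R}$. The genuine difference is how the unique stationary point is certified to be the global minimizer. The paper stays inside the $w$-parametrization and uses a boundary-derivative argument: the derivative of the integrand tends to $-\infty$ as $w\to 0^{+}$ and to $+\infty$ as $w\to 1^{-}$ (the ratio $\phi\left((\theta-u)/\sqrt{\sigma^2+\tau^2}\right)/\phi(u)$ diverges as the quantile $u$ does), so with a single interior root the integrand decreases and then increases, and the critical point is the minimum --- this directly resolves the endpoint worry you flag as the main obstacle. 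You instead invoke Neyman--Pearson: since the log likelihood ratio of $K_\pi$ against $H_\theta$ is a convex quadratic, the most powerful size-$\alpha$ acceptance region is a bounded interval, hence equals $A_{w^{*}}(\theta)$ for some $w^{*}\in(0,1)$, and interiority plus uniqueness of the first-order-condition root forces $w^{*}=w_\psi(\theta)$. Your route proves slightly more --- optimality among all size-$\alpha$ tests, hence among all $1-\alpha$ CRPs, matching the paper's informal discussion preceding Proposition \ref{prop:C} and its concluding sentence, without needing Lemma \ref{lemma:twoside} --- while the paper's argument is self-contained calculus that, read strictly, establishes optimality only within the class $\{C_w\}$, which is all the proposition asserts. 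Two small points to make explicit if you write yours up: dividing the stationarity equation by $b-a$ is legitimate because $1-\alpha w>\alpha(1-w)$ gives $b>a$, and both tails of the MP acceptance region carry positive $P_\theta$-mass (the rejection region contains both infinite tails for any finite likelihood-ratio cutoff), which is what places $w^{*}$ strictly inside $(0,1)$.
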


As stated in \citet{pratt_1963} but not proven, 
$C_{w_{\psi}}(y)$ is actually an 
interval for each $y\in \mathbb R$, and 
so $C_{w_\psi}$ is a 
confidence interval procedure (CIP). In fact, 
a CRP $C_w$ will be a CIP as long as the $w$-function is 
continuous and nondecreasing:
\begin{lemma} Let $w: \mathbb R \rightarrow (0,1)$ be a continuous 
nondecreasing function. 
Then the set $C_w(y) = \{ \theta : 
y+\sigma z_{\alpha (1-w(\theta))} < 
             \theta < y+\sigma z_{1-\alpha w(\theta)}   \}  $
is an interval and 
can be written as $(\theta^L,\theta^U)$, where $\theta^L$ and $\theta^U$ 
are solutions to 
$\theta^L = y+ \sigma z_{\alpha(1-w(\theta^L))}$ and 
$\theta^U = y+ \sigma z_{1-\alpha w(\theta^U)}$. 
\label{lemma:incint}
\end{lemma}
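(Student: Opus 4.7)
The plan is to rewrite the defining inequalities of $C_w(y)$ in terms of two auxiliary functions of $\theta$, verify that these functions are continuous strictly increasing bijections of $\mathbb{R}$, and then invert them to exhibit $C_w(y)$ as an open interval.

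First I would introduce
\[
h_L(\theta) \;=\; \theta - \sigma z_{1-\alpha w(\theta)}, \qquad
h_U(\theta) \;=\; \theta - \sigma z_{\alpha(1-w(\theta))},
\]
so that the pair of inequalities in (\ref{eq:C}) is precisely $h_L(\theta) < y < h_U(\theta)$. Because $\gamma \mapsto z_\gamma$ is strictly increasing, the hypothesis that $w$ is nondecreasing makes both $-z_{1-\alpha w(\theta)}$ and $-z_{\alpha(1-w(\theta))}$ nondecreasing in $\theta$; adding the strictly increasing identity $\theta$ then makes $h_L$ and $h_U$ strictly increasing, and continuity of $w$ together with continuity of $\Phi^{-1}$ on $(0,1)$ gives continuity of both.

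Next I would check surjectivity onto $\mathbb{R}$. Since $w$ is nondecreasing and bounded, it has one-sided limits $w_{\pm} = \lim_{\theta\to\pm\infty} w(\theta) \in [0,1]$; I would check the cases $w_+ \in (0,1]$ and $w_- \in [0,1)$ separately, observing that when the limit lies in the interior the relevant quantile stays bounded and $\theta$ drives $h_L, h_U$ to $\pm\infty$, while when the limit is an endpoint the diverging quantile term happens to have the same sign as $\theta$. In every case, $h_L(\theta), h_U(\theta) \to \pm\infty$ as $\theta \to \pm\infty$, so by the intermediate value theorem each is a continuous strictly increasing bijection of $\mathbb{R}$ with a continuous strictly increasing inverse.

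Inverting gives $\theta \in C_w(y)$ iff $h_U^{-1}(y) < \theta < h_L^{-1}(y)$, so $C_w(y)$ is the open interval $(\theta^L, \theta^U)$ with $\theta^L = h_U^{-1}(y)$ and $\theta^U = h_L^{-1}(y)$; the fixed-point equations in the statement are precisely $h_U(\theta^L) = y$ and $h_L(\theta^U) = y$. Nondegeneracy $\theta^L < \theta^U$ follows from the elementary bound $h_L(\theta) < h_U(\theta)$ (which is equivalent to $z_{\alpha(1-w(\theta))} < z_{1-\alpha w(\theta)}$), since then $h_L(\theta^L) < h_U(\theta^L) = y = h_L(\theta^U)$ and strict monotonicity of $h_L$ forces $\theta^L < \theta^U$.

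The main obstacle is the surjectivity step at the boundary cases $w_\pm \in \{0,1\}$, where one of the quantiles of $\Phi$ blows up and one has to verify that this divergence cooperates with (rather than cancels) the $\theta$ term so that $h_L, h_U$ still attain every real value. The strict monotonicity portion is comparatively straightforward, because the identity supplies the strictness and the only hypothesis on $w$ used is that it is nondecreasing.
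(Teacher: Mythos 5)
Your proof is correct and follows essentially the same route as the paper's: both rewrite the defining inequalities as $f_1(\theta)<y<f_2(\theta)$ for the maps $\theta\mapsto\theta-\sigma z_{1-\alpha w(\theta)}$ and $\theta\mapsto\theta-\sigma z_{\alpha(1-w(\theta))}$, show these are continuous, strictly increasing bijections of $\mathbb R$, invert them to get $(\theta^L,\theta^U)$, and check $\theta^L<\theta^U$ from the pointwise ordering of the two maps. Your only deviation is that you spell out the surjectivity limits in the boundary cases $w_\pm\in\{0,1\}$ (where monotonicity of $w$ keeps the relevant quantile bounded on the appropriate half-line, or the divergence reinforces the $\theta$ term), a step the paper asserts without detail.
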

A bit of algebra shows that
Pratt's FAB $z$-interval  can be expressed as
$C_{w_\psi}  = (\theta^L, \theta^U)$, where 
$\theta^L$ and $\theta^U$ solve 
\begin{align*}
\theta^U &=\frac{y+\sigma \Phi^{-1}(1-\alpha+\Phi(\tfrac{y-\theta^U}{\sigma}))}
                 {1+ 2\sigma^2/\tau^2 }  + \mu \frac{2\sigma^2/\tau^2}
                 {   1+ 2\sigma^2/\tau^2    }  \\  
\theta^L &=\frac{y+\sigma\Phi^{-1}(\alpha - \Phi(\tfrac{\theta^L-y}{\sigma}))}
            {1+ 2\sigma^2/\tau^2 }  + \mu \frac{2\sigma^2/\tau^2}
                 {   1+ 2\sigma^2/\tau^2    }  . 
\end{align*}
Solutions to these equations can be found with a zero-finding algorithm, 
and noting the fact that 
$\theta^L < y+ \sigma z_{\alpha}$ and $y+\sigma z_{1-\alpha} < \theta^U $. 

Some aspects of the FAB $z$-interval procedure are 
displayed graphically in 
Figure \ref{fig:zint}. The left panel gives the $w$-functions 
corresponding to the Bayes-optimal 95\% CIPs
for $\sigma^2=1$, $\mu=0$ and 
$\tau^2 \in \{1/4,1,4\}$.  
At varying rates depending on $\tau^2$, 
the $w$-functions approach zero or one
as $\theta$ moves 
towards $-\infty$ and $\infty$, 
respectively. 
The level-$\alpha$ tests corresponding to these 
$w$-functions are ``spending''  more of their type I error 
on $y$-values that are likely under the 
$N(\mu,\sigma^2+\tau^2)$
prior predictive distribution of $Y$. 
This makes 
the intervals narrower than the usual interval when $y$ is near $\mu$, 
and wider when $y$ is far from $\mu$, 
as shown in the middle panel of the figure. 
In particular, 
at $y=\mu$, the 95\% FAB $z$-interval with $\tau^2=1/4$ has a 
width of 3.29, which is about 84\% of that of the UMAU interval.
Average performance across 
$y$-values is given by risk, or expected confidence interval width, 
displayed in the top right plot. Expected 
widths of the FAB $z$-intervals are 
lower than those of the UMAU intervals for values of $\theta$ near $\mu$
(15\% lower for $\theta=\mu$ and $\tau^2=1/4$),
but can be much higher 
for $\theta$-values far away from $\mu$, particularly 
for small values of $\tau^2$. Relative to 
small values of $\tau^2$, the larger value of $\tau^2=4$ 
enjoys better performance than the UMAU interval over a wider 
range of $\theta$-values, 
but the improvement is not as
large near $\mu$. 
Additional calculations (available from the replication code 
for this article) show that the performance of the FAB 
interval near $\mu$ improves as $\alpha$ increases, as
compared to the UMAU interval. 
For example, with $\tau^2=1/4$ and $\alpha=0.50$, 
the width of the FAB interval at $y=\mu$ is about 25\% of 
that of 
the UMAU interval, and its risk at $\theta=\mu$ is 60\% that of 
the UMAU interval.

\begin{figure}
\centerline{\includegraphics[width=6.5in]{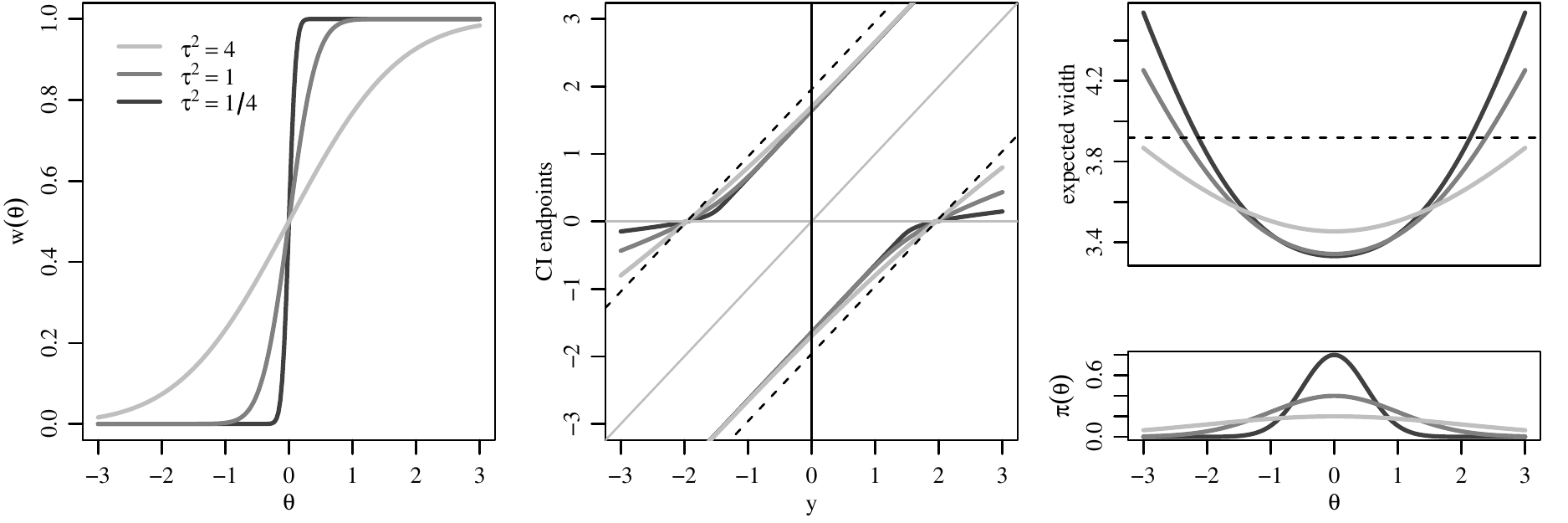}}  
\caption{Descriptions of the FAB $z$-procedure.  
The left plot gives  Bayes-optimal $w$-functions for three
values of $\tau^2$, at level $\alpha=0.05$.  
The middle plot gives the corresponding 
confidence interval procedures, with the UMAU procedure 
given by dashed lines.  
The top plot on the right gives the risk functions (expected widths) 
of the 95\% FAB $z$-intervals for the three values of $\tau^2$, 
with the corresponding prior densities plotted below.  } 
\label{fig:zint}
\end{figure}

\subsection{FAB $t$-intervals}
Adoption of Pratt's $z$-interval has been limited, possibly due to  
two factors: First, in most applications the population 
variance is unknown, and 
second, the prior distribution 
for $\theta$ must  be specified. 
We now address this first issue by developing 
a FAB $t$-interval. 
Suppose we have a sample $Y_1,\ldots, Y_n \sim 
 $ i.i.d.\ $N(\theta,\sigma^2)$, with sufficient 
statistics $(\bar Y, S^2)$, the sample mean and 
(unbiased) sample variance.
The standard UMAU $t$-interval is given 
by $\{ \theta:  \bar y + \tfrac{s}{\sqrt{n}} t_{\alpha/2}  < \theta < 
                \bar y + \tfrac{s}{\sqrt{n}} t_{1-\alpha/2}  \}$. 
This interval is symmetric around $\bar y$, with the same 
tail-area probability ($\alpha/2$) defining the lower and upper 
endpoints. 
The development of the 
$w$-function described in the previous subsection 
suggests viewing the UMAU $t$-interval as belonging to the larger class
of CRPs, given by 
\begin{align}  
C_w(\bar y,s^2) &=
\{ \theta:  \bar y + \tfrac{s}{\sqrt{n}} t_{\alpha(1-w(\theta))} 
               < \theta < 
                \bar y + \tfrac{s}{\sqrt{n}} t_{1-\alpha w(\theta)}  \},  
\label{eqn:citw}
\end{align}  
for some  $w:\mathbb R \rightarrow (0,1)$. Any procedure 
thus defined satisfies 
$\Pr( \theta \in C_w( \bar Y, S^2) | \theta )= 1-\alpha$ for any value of 
$\theta$. 
Additionally, $C_w$ is a CIP as long as $w$ is a 
continuous nondecreasing function:
\begin{lemma} Let $w: \mathbb R \rightarrow (0,1)$ be a continuous
nondecreasing function.
Then the set 
$C_w(\bar y,s^2) =
\{ \theta:  \bar y + \tfrac{s}{\sqrt{n}} t_{\alpha(1-w(\theta))} 
               < \theta < 
                \bar y + \tfrac{s}{\sqrt{n}} t_{1-\alpha w(\theta)}  \}$
is an interval and
can be written as $(\theta^L,\theta^U)$, where $\theta^L$ and $\theta^U$
are solutions to
$\theta^L = \bar y+ \tfrac{s}{\sqrt{n}}t_{\alpha(1-w(\theta^L))}$ and
$\theta^U = \bar y+ \tfrac{s}{\sqrt{n}}t_{1-\alpha w(\theta^U)}$.
\label{lemma:tincint}
\end{lemma}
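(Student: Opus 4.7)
The plan is to mirror the proof of the analogous $z$-interval statement (Lemma \ref{lemma:incint}), since the only features of $\Phi^{-1}$ that matter are that it is continuous and strictly increasing on $(0,1)$, properties that the $t$-quantile $t_\gamma$ also enjoys. Fix $(\bar y, s^2)$ and define the endpoint functions
\[
 L(\theta) = \bar y + \tfrac{s}{\sqrt n}\, t_{\alpha(1-w(\theta))}, \qquad
 U(\theta) = \bar y + \tfrac{s}{\sqrt n}\, t_{1-\alpha w(\theta)}.
\]
Since $w$ is continuous and nondecreasing, $\alpha(1-w(\theta))$ and $1-\alpha w(\theta)$ are continuous and nonincreasing in $\theta$, and composing with the continuous increasing $t$-quantile shows that $L$ and $U$ are continuous and nonincreasing functions of $\theta$.

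Next I would set $f_L(\theta) := \theta - L(\theta)$ and $f_U(\theta) := \theta - U(\theta)$. Each is the sum of a strictly increasing function and a nondecreasing function, hence continuous and strictly increasing. Moreover, because $L$ and $U$ are nonincreasing, $f_L(\theta)\to +\infty$ as $\theta\to+\infty$ and $f_L(\theta)\to-\infty$ as $\theta\to-\infty$, and likewise for $f_U$ (even if $w(\theta)$ approaches $0$ or $1$ at the ends and drives the $t$-quantile to $\pm\infty$, this only makes $L$ or $U$ decrease even faster, so the monotone divergence of $f_L, f_U$ is unaffected). By the intermediate value theorem and strict monotonicity, each of $f_L$ and $f_U$ has a unique zero, call them $\theta^L$ and $\theta^U$, and $f_L(\theta)>0 \iff \theta>\theta^L$, $f_U(\theta)<0 \iff \theta<\theta^U$.

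To finish, I need to verify $\theta^L < \theta^U$. Because $w(\theta)\in(0,1)$ and $\alpha\in(0,1)$, one has $1-\alpha w(\theta) - \alpha(1-w(\theta)) = 1-\alpha > 0$, so $U(\theta) > L(\theta)$ for every $\theta$. Evaluating at $\theta = \theta^L$ gives $\theta^L = L(\theta^L) < U(\theta^L)$, i.e., $f_U(\theta^L) < 0$, and since $f_U$ vanishes only at $\theta^U$ and is strictly increasing, $\theta^L < \theta^U$. Thus the set defining $C_w(\bar y, s^2)$, namely $\{\theta : f_L(\theta) > 0 \text{ and } f_U(\theta) < 0\}$, is exactly the open interval $(\theta^L,\theta^U)$, with endpoints characterized by the fixed-point equations stated in the lemma.

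I do not expect any serious obstacle: the argument is essentially identical to Lemma \ref{lemma:incint}. The only point requiring a moment's thought is the behavior of $L(\theta)$ and $U(\theta)$ as $\theta\to\pm\infty$ when $w(\theta)$ tends to $0$ or $1$, since $t$-quantiles then diverge; but because $L$ and $U$ are nonincreasing while $\theta$ is strictly increasing, the functions $f_L$ and $f_U$ remain strictly increasing from $-\infty$ to $+\infty$ and the zero-finding step goes through unchanged.
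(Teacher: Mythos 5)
Your proof is correct and follows essentially the same route as the paper, which simply reduces Lemma \ref{lemma:tincint} to the argument for Lemma \ref{lemma:incint} with $\bar y$, $s/\sqrt{n}$, and $t$-quantiles in place of $y$, $\sigma$, and $z$-quantiles: you define the strictly increasing functions $\theta - L(\theta)$ and $\theta - U(\theta)$, locate their unique zeros, and use $1-\alpha w - \alpha(1-w) = 1-\alpha > 0$ to get $\theta^L < \theta^U$, exactly as in the paper's treatment. Your added remark about the boundary behavior of the quantiles as $w(\theta)\to 0$ or $1$ is a correct (and slightly more careful) handling of a point the paper passes over.
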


For a given $w$-function, the endpoints of the interval  
can  be re\"expressed as 
\begin{align}
F \left (\frac{ \bar y - \theta^U }{s/\sqrt{n}}\right) & = \alpha w(\theta^U) \label{eqn:tupper} \\
F \left (\frac{\bar y-\theta^L }{s/\sqrt{n}}\right) &=  1- \alpha(1- w(\theta^L)),  \label{eqn:tlower}
\end{align} 
where $F$ is the CDF of the $t_{n-1}$ distribution. 
Using the same logic as at the beginning of Section 2, 
the Bayes risk of a CRP  for a prior distribution 
$\pi$ on $\theta$ and $\sigma^2$
is $R(\pi,C)  = \int \Pr( (\bar Y , S^2) \in A(\theta')) \, d\theta'$, 
where $\Pr( ( \bar Y , S^2) \in  A(\theta'))$ 
is the prior predictive (marginal)
probability of $(\bar Y, S^2)$ being in the acceptance 
region $A(\theta')$ under the prior distribution 
$\pi$. 
Given a prior $\pi$
that corresponds to a continuous, nondecreasing $w$-function, 
the Bayes-optimal FAB interval can be obtained
numerically by using an iterative algorithm to solve 
(\ref{eqn:tupper}) and (\ref{eqn:tlower}). However, 
this requires computation of the $w$-function, 
which for each $\theta$ is the minimizer in $w$  of 
$\Pr( (\bar Y , S^2) \in  A_w(\theta) )$, 
where 
\begin{equation}
A_w(\theta) =  
\left \{ (\bar y, s^2): t_{\alpha w} < \frac{\bar y - \theta }{s/\sqrt{n}} <  
    t_{1-\alpha(1-w) }     \right  \} . 
\end{equation}
Obtaining the optimal $w$-function will generally involve numerical 
integration. 
Consider a $N(\mu,\tau^2)$ prior on $\theta$ and  
so conditionally on
$\sigma^2$ we have $\bar Y \sim N( \mu , \sigma^2/n + \tau^2)$ 
and $(n-1)S^2/\sigma^2 \sim \chi^2_{n-1}$.  
From this we can show that $c (\bar Y -\theta)/(S/\sqrt{n}) $
has a noncentral $t_{n-1}$ distribution with noncentrality parameter
$\lambda= c \tfrac{\mu -\theta}{\sigma/\sqrt{n}}$, 
where $c =\sqrt{\sigma^2/n}/\sqrt{\sigma^2/n+\tau^2}$. 
Therefore, the  probability of the event 
$\{ (\bar Y , S^2) \in A(\theta)\}$, 
conditional on $\sigma^2$, can be written as
\[
\Pr( \{ \bar Y, S^2\}\in A(\theta)|\sigma^2) =
   F_\lambda( ct_{1-\alpha(1-w)}) - F_\lambda( ct_{\alpha w }), 
\]
where $F_\lambda$ is the CDF of the noncentral $t_{n-1}$ distribution 
with parameter $\lambda= c \tfrac{\mu -\theta}{\sigma/\sqrt{n}}$. 
The Bayes-optimal $w$-function is therefore given by 
\begin{equation}
 w_{\pi}(\theta) = \arg \min_w \int \left (  F_\lambda( ct_{1-\alpha(1-w)}) - F_\lambda( ct_{\alpha w }) \right ) 
      p_\pi(\sigma^2) \, d\sigma^2,  
\label{eqn:wtopt}
\end{equation}
where $p_\pi(\sigma^2)$ is the prior density over $\sigma^2$. 

In the replication 
material for this article we provide {\sf R}-code for obtaining $w_{\pi}(\theta)$ 
and the corresponding Bayes-optimal $t$-interval 
$C_{\pi}( \bar y,s^2)$ for 
the class of priors where $\theta$ and $\sigma^2$ are
\emph{a priori} independently distributed as normal and inverse-gamma  
random variables. 
Here, we provide some descriptions of this FAB $t$-interval procedure for 
some parameter values  that make the interval comparable to the 
$z$-interval from Section 2.1.
Specifically, we consider the case that $n=10$, 
$1/\sigma^2 \sim $ gamma$(1,10)$ and $\theta\sim N(0,\tau^2)$ for 
$\tau^2 \in \{1/4,1,4\}$.  
This makes the prior median of $\sigma^2$ near 10, and the variance of 
$\bar Y$ near 1 (and so the variance of $\bar Y$ here is comparable to 
the variance of $Y$ in Section 2.1). 
The left panel of Figure \ref{fig:tint} gives the $w$-functions, which 
are very similar to those of the FAB $z$-procedure displayed in 
Figure \ref{fig:zint}, but with 
somewhat larger derivatives near $\mu$. 
The second panel gives the FAB $t$-intervals as functions of 
$\bar y$, with $s^2$ fixed at 10. Again, the intervals 
resemble the corresponding $z$-intervals, but are slightly wider 
due to the use of $t$-quantiles instead of $z$-quantiles.  
The effect of not knowing $\sigma^2$ is more noticeable in the 
plot of the risk functions, given in the right-upper plot.
While the shapes of the risk functions are similar to 
those of the analogous $z$-intervals, the risks (expected widths) are
larger due to the fact that the width of a $t$-interval is 
dependent on $S^2$, which is proportional to a $\chi^2_9$ random 
variable having non-trivial skew.

\begin{figure}
\centerline{\includegraphics[width=6.5in]{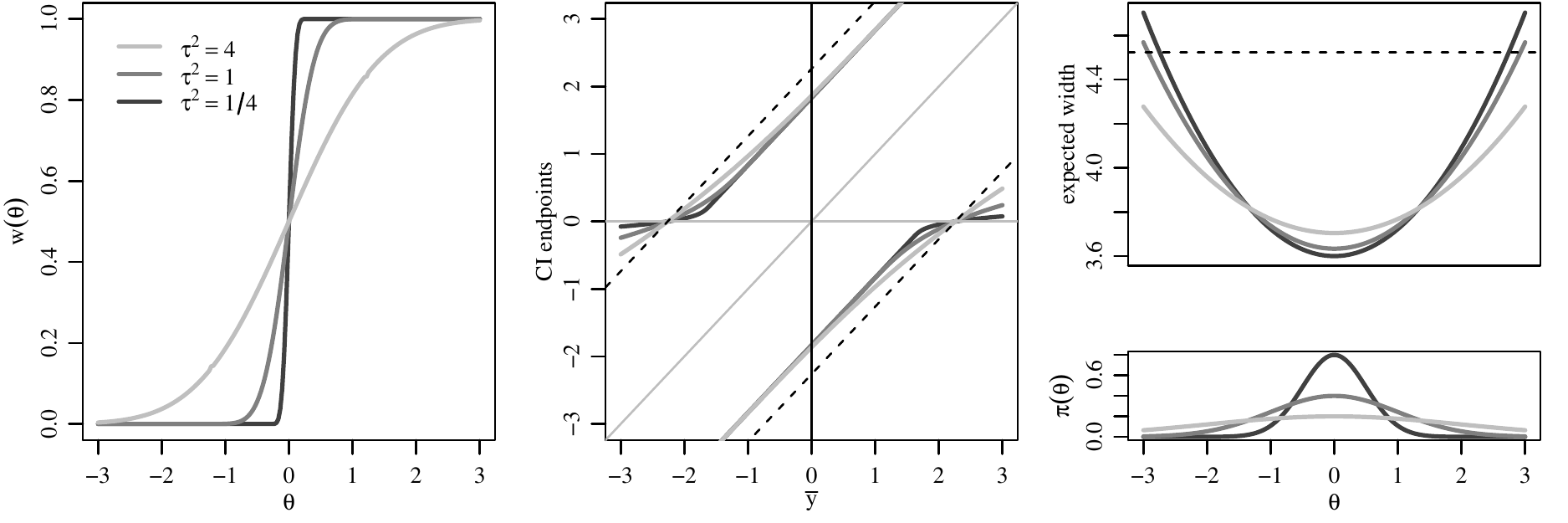}}
\caption{Descriptions of the FAB $t$-procedure.
The left plot gives Bayes-optimal $w$-functions for three
values of $\tau^2$, at level $\alpha=0.05$.
The middle plot gives the corresponding
confidence interval procedures with $s^2$ fixed at 10. 
The top plot on the right gives the expected widths
of the 95\% FAB $t$-intervals for the three values of $\tau^2$,
with the corresponding prior densities plotted below.  } 
\label{fig:tint}
\end{figure}

\section{Empirical FAB intervals for multigroup data}
A potential obstacle to the adoption of FAB confidence intervals 
is the aversion that many researchers have to specifying 
a distribution over $\theta$. 
However, in multigroup data settings,
probabilistic information about the mean $\theta_j$ of one group 
is may be obtained from data of the other groups. 
This information can be used to specify a probability 
distribution $\pi$ for the likely values of $\theta_j$, 
from which an empirical FAB interval may be constructed. 
Such an interval will have 
exact $1-\alpha$ coverage for every value of $\theta_j$, 
but  a shorter expected width for values that are 
deemed likely by $\pi$.  
For the usual homoscedastic hierarchical normal model having 
a common within-group variance, 
we develop such a procedure that may be 
used in practice, and show that it 
is risk-optimal 
asymptotically in the number of groups. 
We also develop a similar procedure for the case of 
heteroscedastic groups. 

\subsection{Asymptotically optimal procedure for homoscedastic groups}
Consider the case of $p$ normal populations 
with means $\theta_1,\ldots, \theta_p$ and common variance and sample size, 
so that $Y_{1,j},\ldots Y_{n,j} \sim $ i.i.d.\ $N(\theta_j,\sigma^2)$ 
independently 
across groups
(common sample sizes are used here solely to 
simplify notation). 
The standard hierarchical normal model posits that 
the heterogeneity across groups can be described by 
a normal distribution, so that 
$\theta_1,\ldots, \theta_p \sim  \text{i.i.d.} \, N(\mu,\tau^2)$. 
In the multigroup setting, 
this normal distribution  is not considered to be a prior distribution for
a single $\theta_j$, but instead is a statistical model 
for the across-group heterogeneity of $\theta_1,\ldots, \theta_p$. 
The parameters describing the across- and within-group
heterogeneity are $\psi=(\mu,\tau^2,\sigma^2)$.

For each group $j$ let $C^j$ be 
a $1-\alpha$ CRP for $\theta_j$ that possibly depends on 
data from all of the other groups. Letting $\bs C =\{ C^1,\ldots, C^p\}$  
we define 
the risk of such a multigroup confidence procedure as 
\[ 
  R(\bs C,\psi) = \frac{1}{p}\sum_{j=1}^p 
   \Exp{  |  C^j(\bs Y) | }, 
\]
where $\bs Y$ is the data from all of the groups and 
the expectation is over both  $\bs Y$ and 
$\theta_1,\ldots, \theta_p$. 
Under the hierarchical normal model, 
the risk at a value of $\psi$ 
is minimized 
by letting each $C^j$ be 
equal to $C_{w_\psi}(\bar y_j)$, 
the FAB $z$-interval 
defined in Section 2 but  with $\psi=( \mu,\tau^2,\sigma^2/n)$, 
since $\Var{\bar Y_j|\theta_j} = \sigma^2/n$. 
The oracle multigroup confidence procedure is 
then ${\bs C}_{w_\psi} = 
\{  C_{w_{\psi}}( \bar y_1 ),\ldots,   C_{w_\psi}( \bar y_p )  \},$
which has risk 
\[
 R( {\bs C}_{w_\psi},\psi) = \frac{1}{p}\sum_{j=1}^p 
   \Exp{  |   C_{w_\psi}(\bar y_j) | } = 
   \Exp{  |   C_{w_\psi}(\bar y) | } , 
\]
where 
$\bar y \sim N( \theta , \sigma^2/n)$
and $\theta\sim N(\mu,\tau^2)$. 
While this oracle procedure is generally unavailable in practice, 
estimates of $\psi$ may be obtained from the data and 
used to construct a multigroup CIP that achieves the 
oracle risk asymptotically as $p\rightarrow \infty$.  
To show how to do this, we first construct a $1-\alpha$ CIP 
for a single $\theta$  based on 
$\bar Y \sim N( \theta , \sigma^2/n)$ and 
 independent estimates $S^2$ and $\hat \psi$ of $\sigma^2$ 
and $\psi$. We show how the risk of this CIP converges to the 
oracle risk as 
$S^2\stackrel{a.s.}{\rightarrow} \sigma^2$ and 
$\hat \psi\stackrel{a.s.}{\rightarrow} \psi$, 
and then show how to use this fact to construct an asymptotically 
optimal multigroup CIP. 

The ingredients of our FAB CIP for a single population mean $\theta$ 
are as follows: 
Let $\bar Y\sim N(\theta,\sigma^2/n)$ and
 $q S^2/\sigma^2 \sim  \chi^2_q$ be independent. 
Consider the $1-\alpha$ CRP for $\theta$ given by 
\begin{equation}
  C_w(\bar y, s^2 ) = 
  \{ \theta : \bar y + \tfrac{s}{\sqrt{n}} 
   t_{\alpha(1-w(\theta)) } < \theta < 
   \bar y + \tfrac{s }{\sqrt{n}}  
     t_{1-\alpha w(\theta) } \}, 
\label{eqn:combo}
\end{equation}
where the $t$-quantiles are those of the $t_{q}$-distribution. 
As described in Section 2.2, this procedure has $1-\alpha$ coverage
for every value of $\theta$ and is an interval if 
$w:\mathbb R \rightarrow (0,1)$ is a continuous nondecreasing function.
This holds for non-random
$w$-functions as well as for random  $w$-functions that are 
independent of $\bar Y$ and $S^2$. 
In particular, 
suppose we have estimates $\hat \psi = (\hat \mu , 
  \hat\tau^2,\hat\sigma^2/n)$ that are independent of $\bar Y$ and $S^2$. 
We can then let $w = 
w_{\hat \psi}$,
the $w$-function 
of the Bayes optimal $z$-interval
assuming a prior distribution $\theta\sim N(\hat \mu, \hat\tau^2)$
and that $\Var{\bar Y|\theta} = \hat \sigma^2/n$. 
Note that we are not assuming $(\mu,\tau^2,\sigma^2/n)$ 
actually equals $(\hat \mu,\hat \tau^2,\hat \sigma^2/n)$, we 
are just using these values to 
approximate the optimal $w$-function 
by $w_{\hat \psi}$ and the optimal CIP by 
$C_{w_{\hat \psi}}$.

The random interval $C_{w_{\hat\psi}}(\bar Y,S^2)$ 
differs from the optimal interval $C_{w_\psi}(\bar Y)$ in 
three ways: First, the former uses $S^2$ instead of 
$\sigma^2$ to scale the endpoints of the interval. 
Second, the former uses $t$-quantiles instead of 
standard normal quantiles. Third, the former uses 
$\hat \psi$ to define the  $w$-function, instead 
of $\psi$. 
Now as $q$ increases, $S^2 \stackrel{a.s.}{\rightarrow} \sigma^2$
and the $t$-quantiles in (\ref{eqn:combo}) converge to 
the corresponding $z$-quantiles. 
If we are also in a scenario where $\hat \psi$ can be indexed by $q$ and 
 $\hat \psi \stackrel{a.s.}{\rightarrow} \psi$, then 
we expect that  $w_{\hat \psi}$ converges to 
$w_{\psi}$ and that the risk of $C_{w_{\hat\psi}}$ 
converges to the oracle risk:
\begin{proposition}
Let $\bar Y\sim N(\theta,\sigma^2/n)$, 
    $q S^2/\sigma^2 \sim  \chi^2_q$, and $\hat \psi$ be independent 
    for each value of $q$, 
    with $\hat \psi \stackrel{a.s.}{\rightarrow } \psi$ as $q\rightarrow
\infty$.  Then 
\begin{enumerate}
\item $C_{w_{\hat\psi}}$ defined in (\ref{eqn:combo}) 
is a $1-\alpha$ CIP for each value of $\theta$ and $q$; 
\item $\Exp{ |  C_{w_{\hat\psi}}|} \rightarrow 
         \Exp{ |  C_{w_{\psi}}|}$ as $q\rightarrow \infty$. 
\end{enumerate}
\label{prop:aopttint}
\end{proposition}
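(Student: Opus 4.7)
My plan is to dispatch coverage by a conditioning argument and then handle the risk limit by combining almost sure pointwise convergence of the widths with a uniform integrability bound.

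For part 1, I would condition on $\hat\psi$. Because $\hat\psi$ is independent of $(\bar Y, S^2)$, the conditional law of $(\bar Y, S^2)$ given $\hat\psi$ is unchanged. Proposition \ref{prop:C} shows $w_{\hat\psi}$ is a continuous, strictly increasing function on $\mathbb R$, so Lemma \ref{lemma:tincint} gives that $C_{w_{\hat\psi}}(\bar Y, S^2)$ is an interval. From the $t$-interval construction in \eqref{eqn:citw}, the acceptance region $A_{w_{\hat\psi}}(\theta)$ is a level-$\alpha$ two-sided $t$-test irrespective of the value $w_{\hat\psi}(\theta) \in (0,1)$, so $\Pr(\theta \in C_{w_{\hat\psi}}(\bar Y, S^2) \mid \theta, \hat\psi) = 1-\alpha$; averaging over $\hat\psi$ gives the unconditional coverage.

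For part 2, I would first establish the almost sure convergence $|C_{w_{\hat\psi}}(\bar Y, S^2)| \to |C_{w_\psi}(\bar Y)|$. The explicit formula $w_\psi(\theta) = g^{-1}(2\sigma(\theta-\mu)/\tau^2)$ from Proposition \ref{prop:C} is jointly continuous in $(\psi,\theta)$; the $t_q$-quantile function converges pointwise to the standard normal quantile as $q\to\infty$; and $S^2 \to \sigma^2$ almost surely by the strong law. The endpoints $(\theta^L,\theta^U)$ solve the monotone fixed-point equations \eqref{eqn:tupper}--\eqref{eqn:tlower}, and for continuous nondecreasing $w$ these solutions depend continuously on the inputs $(\bar Y,S,w,F)$. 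Combining these facts gives a.s.\ convergence of each endpoint to its $z$-interval counterpart, hence of the width.

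The remaining and most delicate step is passing expectations through the limit, for which I would prove uniform integrability of the widths. Writing the width as $(S/\sqrt{n})[t_{q,1-\alpha w(\theta^U)} - t_{q,\alpha(1-w(\theta^L))}]$ exposes the main difficulty: the bracket blows up when $w$ approaches $0$ or $1$. The resolution uses the explicit Gaussian rate at which $w_\psi(\theta)\to 0,1$ as $\theta\to\mp\infty$: since $\Phi^{-1}(\alpha w_\psi(\theta))$ is of order $|\theta|$ by the defining equation, $t_{q,1-\alpha w_\psi(\theta^U)}$ grows only linearly in $|\theta^U|$, and solving the resulting implicit inequality bounds $\theta^U$ (and symmetrically $\theta^L$) linearly in $|\bar Y|$ with constants that are uniform in $q$ and in $\hat\psi$ lying in a compact neighborhood of $\psi$ (which contains $\hat\psi_q$ eventually, a.s.). Since $\bar Y$ and $S$ have moments of all orders, this yields an $L^{1+\delta}$ bound on the widths uniform in $q$, hence uniform integrability, after which part 2 follows. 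The principal obstacle is precisely this uniform tail control, where one must quantitatively exploit the Gaussian tails of $w_\psi$ rather than rely on a crude worst-case bound.
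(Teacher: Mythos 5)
Your part 1 and your overall architecture for part 2 (almost-sure convergence of the endpoints plus a uniform-integrability/domination step to pass to expectations) match the paper, which proves a.s.\ convergence of the endpoints by splitting $|G_q(\hat\psi,\bar Y,S^2)-G(\psi,\bar Y,\sigma^2)|$ into a quantile-convergence term (handled via monotonicity in $q$ and Dini's theorem) and a continuity term, and then applies dominated convergence. The genuine gap is in your tail-control step. You claim that because $\Phi^{-1}(\alpha w_\psi(\theta))$ is of order $|\theta|$, the quantile $t_{q,1-\alpha w_\psi(\theta^U)}$ ``grows only linearly in $|\theta^U|$'' uniformly in $q$, and that solving the implicit inequality then bounds $\theta^U$ linearly in $|\bar Y|$. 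The linearity is a property of the \emph{normal} quantile only: it is built into the definition of $w_\psi$ through $g$. For the $t_q$ quantile with finite degrees of freedom the composition $F_q^{-1}\bigl(1-\alpha w_\psi(\theta)\bigr)$ grows superlinearly (roughly like $\exp\{c\,\theta^2/q\}$, since $t_q$ has polynomial tails while $\alpha w_\psi(\theta)$ has Gaussian tails), and this remains superlinear for every fixed $q$, so the premise of your ``solve the implicit inequality'' step fails and the claimed uniform-in-$q$ linear bound does not follow by that route.

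Moreover, the blow-up you worry about ($w\to 0$ for the upper endpoint, $w\to 1$ for the lower) occurs only on the side of $\hat\mu$ where it is harmless, and the correct bound needs no growth-rate analysis at all. This is exactly what the paper's auxiliary lemma does: since $w_{\hat\psi}$ is increasing with $w_{\hat\psi}(\hat\mu)=1/2$, either $\theta^U>\hat\mu$, in which case $w_{\hat\psi}(\theta^U)>1/2$ and the defining quantile is at most $t_{q,1-\alpha/2}$, so $\theta^U<\bar Y+\tfrac{S}{\sqrt n}t_{q,1-\alpha/2}$; or $\theta^U\le\hat\mu$ outright. Arguing symmetrically for $\theta^L$ gives $|C_{w_{\hat\psi}}(\bar Y,S^2)|<|\bar Y-\hat\mu|+\tfrac{S}{\sqrt n}\bigl(|t_q(\alpha/2)|+|t_q(1-\alpha/2)|\bigr)$. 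Combining this with $|t_q(\gamma)|\le|t_1(\gamma)|$, and with the fact that a.s.\ eventually $|\hat\mu|\le|\mu|+c_1$ and $S^2\le\sigma^2+c_2$, yields an integrable dominating function ($|\bar Y|$ is folded normal), after which dominated convergence gives part 2. So your conclusion is reachable, but the quantitative claim you lean on for uniform integrability is false as stated and should be replaced by this monotonicity-around-$\hat\mu$ bound. (Your continuity claim for the fixed-point endpoints is also stated without justification; the paper makes it rigorous via the monotone-in-$q$ quantiles and Dini's theorem, but that is a presentational rather than substantive difference.)
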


We now return to the problem of constructing an asymptotically optimal 
multigroup procedure.
Let $\bar Y_j$ and $S^2_j$ be the sample mean and variance for a given group $j$. 
Divide the remaining groups into two sets, with $p_1-1$ in the first set 
and $p_2=p-p_1+1$ in the second. 
Pool the group-specific sample 
variances of the first 
set of groups with $S_j^2$ to obtain an estimate $\tilde S^2_j$ of 
$\sigma^2$,
so that $ p_1(n-1)\tilde S^2_j/\sigma^2\sim \chi^2_{p_1(n-1)}$. 
From the remaining groups, obtain a strongly consistent estimate $\hat \psi_j$  of $\psi$
(such as the MLE or a moment-based estimate). 
Then $\bar Y_j$, $\tilde S^2_j$ and $\hat \psi_j$ 
are independent for each value of $p$. 
Therefore, a $1-\alpha$ CIP for $\theta_j$ is given by 
\begin{equation}
  C_{w_{\hat\psi_j}}(\bar y_j, \tilde s_j^2 ) = 
  \{ \theta_j : \bar y_j + \tfrac{\tilde s_j}{\sqrt{n}} 
   t_{\alpha(1-w_{\hat\psi_j}(\theta_j)) } < \theta_j < 
   \bar y_j + \tfrac{\tilde s_j}{\sqrt{n}}  
     t_{1-\alpha w_{\hat \psi_j}(\theta_j) } \}, 
\label{eqn:aopttint}
\end{equation}
where the quantiles are those of the $t_{p_1(n-1)}$ distribution. 
If $p_1$ is chosen so that 
it remains a fixed fraction of $p$ as $p$ increases, 
then $\tilde S^2_j$ and $\hat \psi_j$ converge to  
$\sigma^2$ and $\psi$ respectively, and  
the $t$-quantiles converge to the corresponding 
standard normal quantiles. 
By Proposition \ref{prop:aopttint}, the risk of this interval 
converges to that of the oracle risk. 
Repeating this construction for each group $j$ results in 
a multigroup confidence procedure that has $1-\alpha$ coverage 
for each group \emph{conditional} on $(\theta_1,\ldots, \theta_p)$, 
but is also asymptotically optimal on average across the
$N(\mu,\tau^2)$ population of $\theta$-values. 

In practice for finite $p$, different choices of 
$p_1$ and $p_2$ will affect the resulting confidence intervals. 
Since the minimal width of each interval is directly tied 
to the degrees of freedom $p_1(n-1)$ of the
variance estimate  $\tilde S_j^2$, we suggest choosing $p_1$ 
to ensure that
the quantiles of 
the $t_{p_1(n-1)}$ distribution are reasonably close to those 
of the standard normal distribution. 
If either $p$ or $n$ are large, this can be done 
while still allowing $p_2$ to be large enough 
for $(\hat \mu,\hat\tau^2,\hat\sigma^2/n)$ to be useful.

\subsection{A procedure for heteroscedastic groups}
If a researcher is unwilling to assume a common within-group 
variance, constant $1-\alpha$ group-specific coverage
can still be ensured by using  intervals 
of the form 
\begin{equation}
  C_{w_j}(\bar y_j, s^2_j ) = 
  \{ \theta_j : \bar y_j + \tfrac{s_j}{\sqrt{n_j}} 
   t_{\alpha(1-w_j(\theta_j)) } < \theta_j < 
   \bar y_j + \tfrac{s_j }{\sqrt{n_j}}  
     t_{1-\alpha w_j(\theta_j) } \}, 
\label{eqn:hetfabt}
\end{equation}
where $w_j$ is an estimate of the Bayes-optimal $w$-function 
discussed at the end of Section 2.2, 
estimated with data from groups 
other than $j$. We recommend obtaining $w_j$ from a hierarchical 
model for both the group-specific means and variances,
as this allows across-group sharing of information about both 
of these quantities. 
For example, 
the replication material for this article provides code to 
obtain estimates of the 
$w$-function that is optimal for the following
model of across-group heterogeneity:
\begin{align} 
\theta_1,\ldots, \theta_p &\sim \text{i.i.d.} \ N(\mu,\tau^2) 
\label{eqn:hmmv}  \\
1/\sigma_1^2,\ldots, 1/\sigma_p^2 &\sim \text{i.i.d.} \ \text{gamma}(a,b). \nonumber 
\end{align}
We estimate the 
across-group heterogeneity parameters
 $(\mu,\tau^2,a,b)$ as follows:
For each group $j$
 let $X_j^2=\sum_i (Y_{i,j} -\bar Y_j)^2 \sim \sigma^2_j \chi^2_{n_j-1}$. 
If $1/\sigma^2_j\sim \text{gamma}(a,b)$ independently for each $j$
then the marginal density of $X_1^2,\ldots, X_p^2$ can be shown to be 
\[
  p(x_1^2,\ldots, x_p^2|a,b)  
\prod_{j=1}^p  c(x^2_j) 
    \frac{\Gamma(a+ (n_j-1)/2) b^a }{\Gamma(a)
     (b+x^2_j/2)^{a+(n_j-1)/2} }, 
\] 
where $c$ is a function that does not depend on $a$ or $b$. 
This quantity can be maximized 
to obtain marginal maximum likelihood estimates of 
$\hat a$ and $\hat b$.  
Now if $\sigma^2_1,\ldots, \sigma^2_p$ were known, 
then a maximum likelihood estimate of 
$( \mu,\tau^2)$ could be obtained 
based on the fact that 
under the hierarchical model, 
$\bar Y_j \sim N( \mu , \sigma^2_j/n_j + \tau^2)$ independently 
across groups. Since the $\sigma^2_j$'s are not known 
we use empirical Bayes estimates, given by 
  $\hat\sigma^2_j = (\hat b + x_j^2/2)/(\hat a + (n_j-1)/2) $, to 
obtain  the ``plug-in'' marginal likelihood estimates
$(\hat \mu,\hat\tau^2)$:
\[
 (\hat \mu,\hat\tau^2) = \arg \max_{\mu,\tau^2} 
 \prod_j \frac{1}{\sqrt{ \hat \sigma^2_j/n_j + \tau^2}}
 \phi\left ( \frac{\bar y_j-\mu}{\sqrt{\hat \sigma^2_j/n_j + \tau^2}}\right ),
\]
where $\phi$ is the standard normal probability  density function. 

To create a FAB $t$-interval for a given group $j$, we obtain 
estimates $(\hat \mu_j,\hat \tau^2_j,\hat a_j,\hat b_j)$ using 
the procedure  described above with data 
from all groups except group $j$. 
The $w$-function $w_j$ for group $j$ is 
taken to be the Bayes-optimal $w$-function defined by 
Equation \ref{eqn:wtopt}, under the estimated prior 
$\theta_j \sim N(\hat \mu_j ,\hat \tau^2_j)$ and 
$1/\sigma^2_j \sim \text{gamma}(\hat a_j ,\hat b_j)$. 
The independence of $(\bar Y_j,S^2_j)$ and 
$(\hat \mu_j,\hat \tau^2_j ,\hat a_j,\hat b_j)$ 
ensures that the resulting FAB $t$-interval has exact 
$1-\alpha$ coverage, conditional on $\theta_1,\ldots, \theta_p$
and $\sigma_1^2,\ldots, \sigma_p^2$. 

We speculate that this procedure enjoys similar optimality 
properties to those of the approach for homoscedastic groups 
described in Section 3.1: If the hierarchical model 
given by (\ref{eqn:hmmv}) is correct, then as the number $p$ of 
groups  increases, the estimates 
$(\hat \mu_j,\hat \tau^2_j,\hat a_j,\hat b_j)$ will converge to 
$(\mu,\tau^2,a,b)$ and the interval for a given group 
will converge to the corresponding Bayes-optimal interval. 
So far we have been unable to prove this, the primary difficulty 
being that the Bayes-optimal $w$ function
given by Equation $\ref{eqn:wtopt}$ is a non-standard integral 
involving the non-central $t$-distribution, and is not easily 
studied analytically.

\section{Radon data example}

A study by the U.S. Environmental Protection Agency measured radon levels in
a random sample of homes.
\citet{price_nero_gelman_1996} use a subsample of these data to estimate
county-specific mean radon levels (on a log scale) in the state of Minnesota.
This dataset consists of log radon values measured in 919 homes, each
being located in one of $p=85$ counties.
County-specific sample sizes ranged from 1 to 116 homes.
In this section we obtain a $95\%$ FAB confidence interval
for each county-specific mean radon level, based on 
data from all of the counties, and 
compare these intervals to the corresponding UMAU intervals. 
Also, in a simulation study based on these data, we compare 
the expected widths of these two types of intervals to 
empirical Bayes posterior intervals, and show how the latter 
do not provide constant coverage across values of the  county-specific means.

\subsection{County-specific confidence intervals}
Letting $Y_{i,j}$ be the radon measurement for home $i$ in  county $j$, 
we assume
throughout this section that 
$Y_{1,j}\ldots, Y_{n_j,j} \sim $ i.i.d.\ $N(\theta_j, \sigma^2_j)$
and that the data are independently sampled across counties. 
Under the assumptions of a constant across-county variance and the 
normal hierarchical model
$\theta_1,\ldots, \theta_p \sim $ i.i.d.\ $N(\mu,\tau^2)$,
the maximum likelihood estimates of $\sigma^2$, $\mu$ and $\tau^2$ are
$\hat \sigma^2 = 0.637$,
$\hat \mu =1.313$ and $\hat \tau^2 =0.096$. The estimate of the
across-county variability is substantially smaller than the estimate of
within-county variability, suggesting that there is useful information 
to be shared across the groups. 
However, Levene's test of heteroscedasticity (an $F$-test using the  
absolute difference between the data and group-specific medians) 
rejects the null of homoscedasticity with a $p$-value of 0.011. 
For this reason, we use the FAB $t$-interval procedure described in 
Section 3.2 for each group, having the 
form $\{ \theta_j:  \bar y_j +
\sqrt{s^2_j/n_j}\times t_{\alpha(1-w_j(\theta_j))} < \theta_j <  
\bar y_j + \sqrt{s^2_j/n_j}\times t_{1-\alpha w_j(\theta_j)}
\}$, where $\alpha=.05$, 
$\bar y_j$ and $s^2_j$ are the sample mean and variance from 
county $j$, and $w_j$ is the optimal $w$-function 
assuming $\theta_j \sim N( \hat \mu_j, \hat \tau^2_j)$ 
and $1/\sigma^2_j \sim $ gamma$(\hat a_j,\hat b_j)$, 
where $\hat\mu_j,\hat \tau^2_j,\hat a_j$ and $\hat b_j$ are estimated from the 
counties other than county $j$. 
Such intervals have 95\% coverage for each county, 
assuming only within-group normality. 

We constructed FAB and UMAU intervals for each county that had a sample size
greater than one, 
i.e.\ counties for which we could obtain an 
unbiased within-sample variance estimate. 
Intervals for counties with sample sizes greater than two are 
displayed in 
Figure \ref{fig:radon} (intervals based on a sample size of two were 
excluded from the figure because their widths make smaller
intervals difficult to visualize). 
The UMAU intervals are wider than the FAB intervals 
for 77 of the 82 counties having a sample size greater than 1, and are
30\% wider  on average 
across counties. 
Generally speaking, the counties for which the FAB intervals provide 
the biggest improvement are those with smaller sample sizes and 
sample means near the across-group average. Conversely, 
the five counties for which 
the UMAU intervals are  narrower than the FAB  interval are
those with moderate to large sample sizes, and sample means 
somewhat distant from the across-group average.

\begin{figure}
\centering
\includegraphics[width=6.5in]{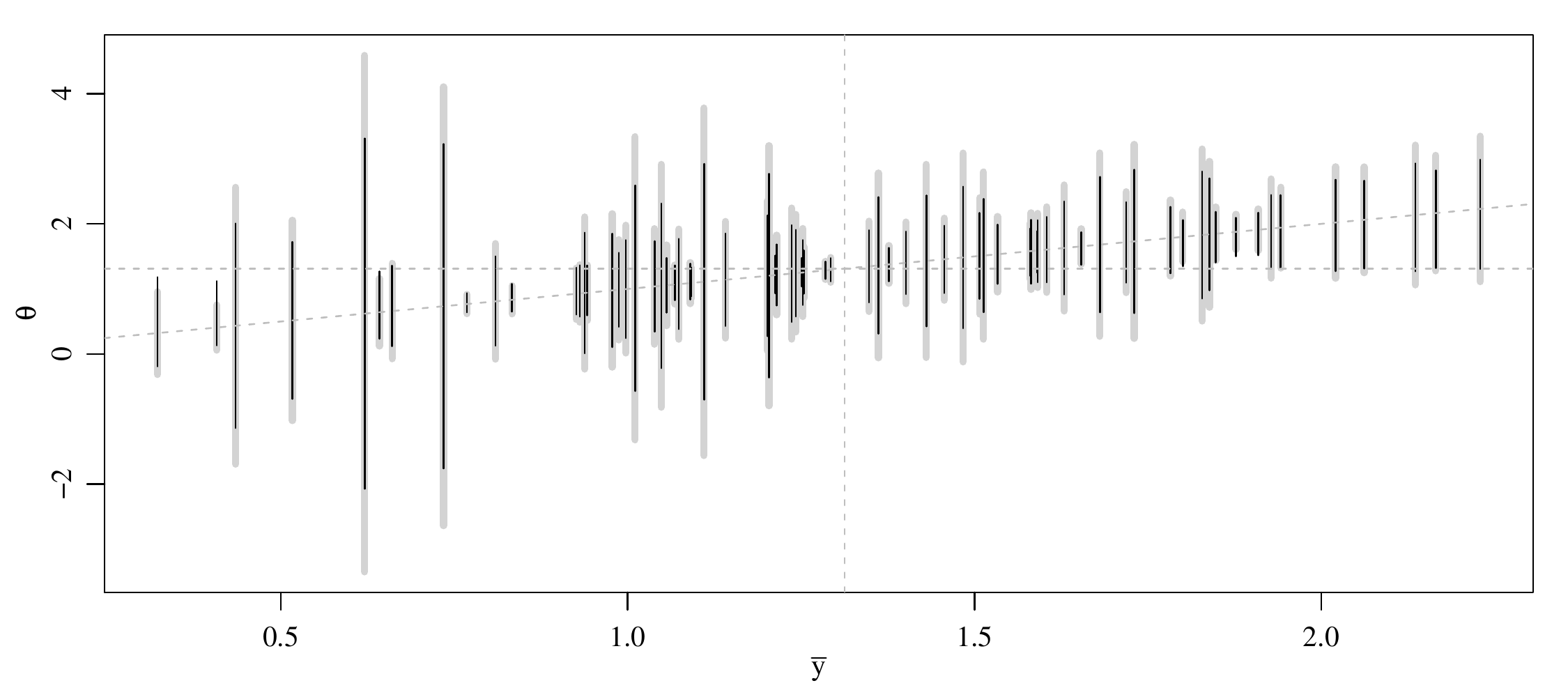}
\caption{FAB and UMAU 95\% confidence intervals for the radon dataset.
   The UMAU intervals are plotted as wide gray lines, the FAB intervals
   as narrow black lines. Vertical and horizontal  lines are drawn at 
   $\sum \bar y_j/p$. 
  }
        \label{fig:radon}
\end{figure}

\subsection{Risk performance and comparison to posterior intervals} 
Assuming within-group normality, 
the FAB interval procedure described above has 95\% coverage 
for each group $j$ and for all values of $\theta_1,\ldots, \theta_p$.  
Furthermore, the procedure is designed 
to approximately 
minimize the expected risk under the hierarchical model  
$\theta_1,\ldots,\theta_p \sim $ i.i.d.\ $N(\mu,\tau^2)$, 
among all $95\%$ CRPs. 
However, one may wonder how well the FAB procedure works 
for fixed values of $\theta_1,\ldots, \theta_p$. 
This question is particularly relevant in cases where 
the hierarchical model is misspecified, or 
if a hierarchical model is not appropriate (e.g., if the 
groups are not sampled). 
We investigate this for the radon data with a simulation study in 
which we take the county-specific sample means and variances 
as the true county-specific values, that is, 
we set $\theta_j = \bar y_j$ and $\sigma^2_j = s_j^2$ for each 
county $j$. We then simulate $n_j$ observations 
for each county $j$ from the model 
$Y_{1,j},\ldots, 
 Y_{n_j,j}\sim $ i.i.d.\ $N(\theta_j,\sigma^2_j)$. 

We generated 10,000 such simulated datasets. 
For each dataset, we computed the widths of the 
95\%  FAB and UMAU confidence 
intervals for each county having a sample size greater than one. 
Additionally, for comparison we also computed 
empirical Bayes posterior 
intervals, which are often used in hierarchical modeling. 
The posterior interval for 
group $j$ is given by 
$\hat \theta_j  \pm t_{1-\alpha/2 }  \times 
     (1/\hat \tau^2 +n_j/\hat s^2_j )^{-1/2}$, where 
$\hat \theta_j$ is the empirical Bayes estimator given by 
\[
\hat\theta_j  = \frac{  \hat \mu/ \hat \tau^2 + \bar y_j n_j/ s_j^2}
                      { 1/\hat\tau^2 + n_j/ s_j^2 }, 
\]
and 
$t_{1-\alpha/2}$ is the $1-\alpha/2$ quantile of the $t_{n-1}$-distribution.
As discussed in the Introduction, such intervals are always narrower than
the corresponding UMAU intervals but will not 
have $1-\alpha$ frequentist coverage for each group. Instead, 
such intervals generally have $1-\alpha$ coverage on average, 
or in expectation with respect to the hierarchical model over 
the $\theta_j$'s. 

The results of this simulation study are displayed in Figure 
\ref{fig:radon_sim}. The left panel of the figure 
gives the expected widths of the FAB and Bayes 
procedures relative to those of the UMAU procedure.  
Based on the 10,000 simulated datasets, 
the estimated expected widths across counties were
about 
2.28, 1.60 and 1.61, 
respectively for the UMAU, FAB and Bayes procedures respectively. 
As with the 
actual interval widths for the non-simulated data, 
expected widths of the FAB intervals are smaller than 
those of the UMAU intervals for most counties (79 out of 82). 
The Bayes intervals are always narrower than the 
UMAU intervals for all groups by construction. However, 
while they tend to be narrower than the FAB intervals
for $\theta_j$'s far from $\bar\theta=\sum \theta_j/p$, near this average 
they are often wider than the FAB intervals. This is not 
too surprising - the FAB intervals are at their narrowest near this 
overall average, while 
the Bayes intervals tend to over-cover here. 
This latter issue is illustrated in the right panel of the 
figure,
which shows how the Bayes credible intervals do not have constant 
coverage across groups. 
This is because
the Bayes intervals are centered around biased 
estimates that are shrunk towards the estimated overall mean $\bar \theta$. 
If $\theta_j$ is far from $\bar\theta$ then the bias is high and 
the coverage is too low, whereas if $\theta_j$ is near $\bar\theta$ 
the coverage is too high since the variability of 
the shrinkage estimate
$\hat\theta_j$ is lower than that of $\bar y_j$.  
The group-specific coverage rates of the Bayes intervals 
vary from about 91\% to 98\%, 
although the average coverage rate 
across groups is approximately 95\%. 
In summary, 
the UMAU procedure provides constant $1-\alpha$ coverage across 
groups, but wider intervals  than those obtained from the FAB and Bayes 
procedures. 
The Bayes procedure provides narrower intervals but non-constant coverage. 
The FAB 
procedure provides both narrower intervals and 
constant coverage.

\begin{figure}
\centering
\includegraphics[width=6.5in]{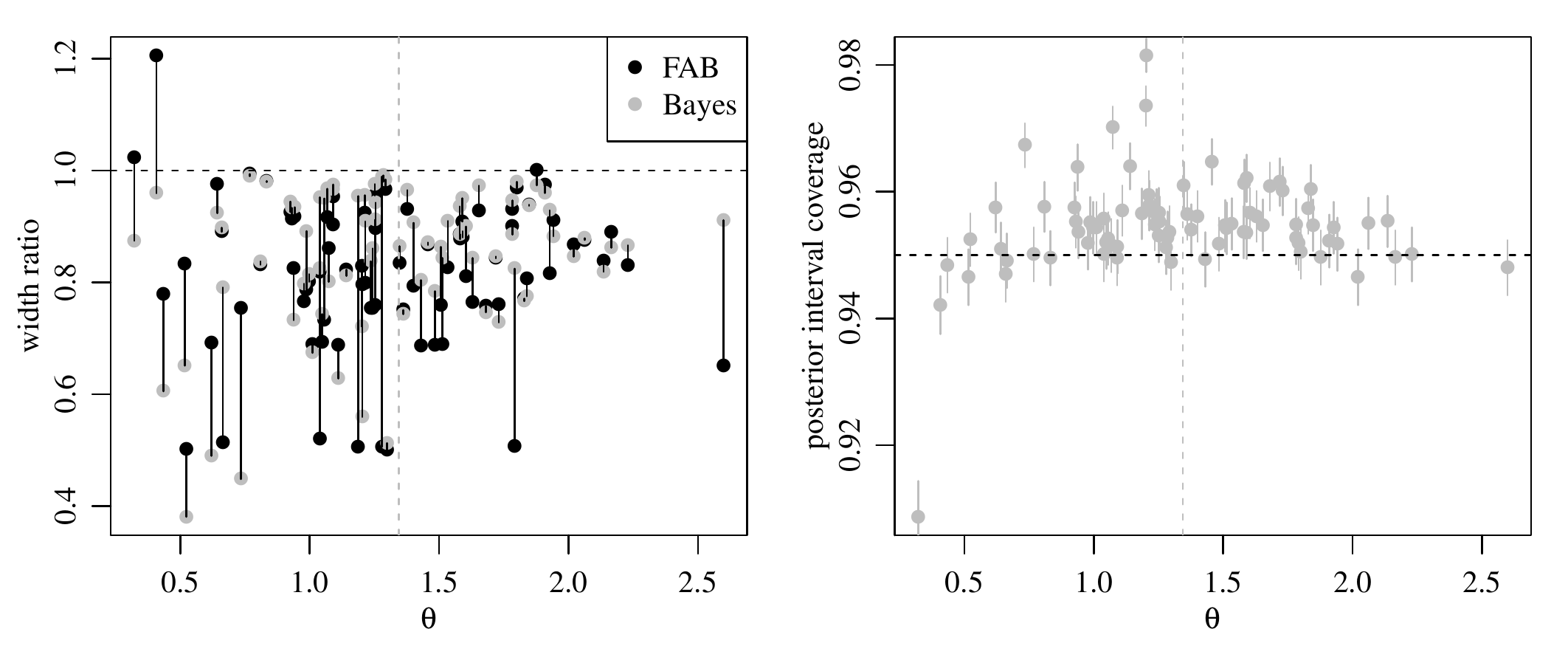}
\caption{Simulation results. The left panel 
gives relative expected interval widths of the FAB and 
Bayes procedures relative to the UMAU procedure. 
The right panel indicates how coverage rates 
of Bayes posterior intervals
are not constant across groups. 
Points are coverage rates based on 10,000 simulated 
datasets, and vertical lines are nominal 95\% intervals 
representing Monte Carlo standard error. 
Vertical lines are drawn at  $\sum \theta_j/p$ in each panel.
  }
 \label{fig:radon_sim}
\end{figure}

\section{Discussion}
Standard analyses of multilevel data utilize 
multigroup confidence interval procedures
that either have constant coverage but do not 
share information across groups, or share information 
across groups but lack constant coverage. 
These latter procedures typically do maintain 
a pre-specified coverage rate on average across groups, 
but the value of this property is unclear 
if one wants to make group-specific inferences. 
The FAB procedures 
developed in this article 
have coverage rates that are constant in the mean parameter, 
and so maintain constant coverage for each group 
selected
into the dataset, while
also making use of across-group information. 
The FAB procedures are  approximately optimal among constant coverage procedures
if the across-group heterogeneity is well-represented 
by a normal hierarchical model. 

If the across-group heterogeneity is not well-represented by 
a hierarchical normal model, then the FAB procedure 
will still maintain the chosen constant coverage rate but may not 
be optimal. 
We speculate that in such cases, 
the FAB procedure based on a hierarchical normal model, while not optimal,
will still have better risk than the UMAU procedure when the 
across-group heterogeneity corresponds to any probability distribution with a 
finite second moment. This is partly because the UMAU procedure is 
a limiting case of the FAB procedure as the across-group 
variance goes to infinity. 
We have developed an analytical argument of this and have 
gathered computational evidence, 
but a complete proof 
of the dominance of a misspecified FAB procedure over the UMAU 
procedure is still a work in progress. 
Of course, the basic idea behind the FAB procedure could be 
implemented with alternative models describing across-group heterogeneity, 
such as 
models that allow for sparsity 
among the group-level parameters. 
We have implemented a few such procedures computationally, but 
studying them analytically is challenging. 

\medskip

Replication code for this article can be found at the second author's 
website. The  multigroup FAB procedures discussed 
in Sections 3 and 4 are provided by the {\sf R}-package
{\tt fabCI}.


\appendix

\section*{Appendix: Proofs}

\begin{proof}[Proof of Lemma \ref{lemma:twoside}]
	This lemma follows from \citet[Section 5.3]{Ferguson_1967}, 
which says that 
for any level-$\alpha$  test of a point null hypothesis
for a one-parameter exponential family, there
exists a two-sided test of equal or greater power. 
Let $\{\tilde \phi_\theta(y):\theta\in \mathbb R\}, \{\tilde A(\theta): \theta\in \mathbb R\}$ be the test functions and 
acceptance regions corresponding to the CRP $\tilde C(y)$.  
The coverage of $\tilde C$ is 
	\begin{equation}
		\Pr(Y \in \tilde A(\theta) | \theta) = 1- \Exp{\tilde \phi_{\theta}(Y)|\theta}.
		\label{eq:rateC1}
	\end{equation}
	By Theorem 2 from \citet[Section 5.3]{Ferguson_1967}, 
for each $\theta\in \mathbb R$ there exists a two-sided test $\phi_{\theta}$ such that
	\begin{equation}
		\Exp{\phi_{\theta}(Y) | \theta} = \Exp{\tilde \phi_{\theta}(Y)| \theta}.
		\label{eq:th21}
	\end{equation}
Denote the acceptance regions corresponding to these two-sided test as $\{ A(\theta):\theta\in \mathbb R\}$. 
Inverting these regions gives a CIP $C(y)$. The coverage of  $C(y)$ is 
\begin{equation}
\Pr(Y \in A(\theta) | \theta) = 1- \Exp{\phi_{\theta}(Y)|\theta}.  
	\label{eq:rateC2}
\end{equation}
	Hence by (\ref{eq:rateC1}), (\ref{eq:rateC2}), and (\ref{eq:th21}), the coverage of $C(y)$ is the same as the coverage of $\tilde C(y)$.
The width of $\tilde C(y)$ is:
\begin{equation*}
W(y) = \int_{\mathbb{R}} 1(t \in \tilde C(y)) dt = \int_{\mathbb{R}} 1(y \in \tilde A(t)) dt = \int_{\mathbb{R}} (1 - \tilde \phi_t(y)) dt. 
	\end{equation*} 
The expected width of $\tilde C(y)$ is:
	\begin{equation}
		\Exp{\tilde{W}|\theta } = \int_{\mathbb{R}} W(y) p(y | \theta) dy = \int_{\mathbb{R}} \int_{\mathbb{R}}  (1 - \tilde \phi_t(y))  p(y | \theta) dy dt 
		\label{eq:ew1}
	\end{equation} 
where $p(y|\theta)$ is the density of $Y$ given $\theta$. 
Similarly, the expected width of $C(y)$ is
	\begin{equation}
\Exp{W|\theta} =  \int_{\mathbb{R}} \int_{\mathbb{R}}  (1 - \phi_t(y))  p(y | \theta) dy dt. 
		\label{eq:ew2}
	\end{equation} 
Again, by Theorem 2 from \citet[Section 5.3]{Ferguson_1967}, for every $\theta \in \mathbb{R}$ 
	\begin{equation*}
		\int_{\mathbb{R}}   \phi_t(y) p(y | \theta) dy \geq \int_{\mathbb{R}}   \tilde \phi(y) p(y | \theta) dy.
	\end{equation*} 
	Thus 
	\begin{equation}
		\int_{\mathbb{R}}  (1- \phi_t(y)) p(y | \theta) dy \leq \int_{\mathbb{R}}  (1- \tilde \phi_t(y)) p(y | \theta) dy.
		\label{eq:ew3}
	\end{equation} 
	Therefore by (\ref{eq:ew1}), (\ref{eq:ew2}), (\ref{eq:ew3}), we have $\Exp{W |\theta}  \leq \Exp{\tilde W |\theta}$.
\end{proof}

\begin{proof}[Proof of Proposition \ref{prop:C}]
	Without loss of generality, we prove the proposition for the simple case when $\mu = 0$ and $\sigma^2 = 1$. 
Other cases can be obtained by reparametrizing as  
$\tilde Y=(Y-\mu)/\sigma$, $\tilde \theta= (\theta-\mu)/\sigma$ and 
$\tilde \tau^2 = \tau^2/\sigma^2$ so that 
$\tilde Y\sim N(\tilde\theta,1)$ and $\tilde \theta\sim N(0,\tilde \tau^2)$. 

The Bayes optimal procedure minimizes the Bayes risk 
		$R(\psi ,C_w) = \int \Pr( Y\in A(\theta) ) \, d\theta$, 
where $Y$ has the marginal density $N(0,1+\tau ^2)$. 
For a given $w$-function, the Bayes risk is
	\begin{equation}
		\begin{aligned}
			R(\psi ,C_w) & = \int_{\mathbb{R}} \Phi(\frac{\theta - l}{\sqrt{1+\tau ^2}}) - \Phi(\frac{\theta - u}{\sqrt{1+\tau ^2}}) \ d\theta \\
			& = \int_{\mathbb{R}} \Phi(\frac{\theta- \Phi^{-1}(\alpha(1-w))}{\sqrt{1+\tau ^2}}) - \Phi(\frac{\theta - \Phi^{-1}(1-\alpha w)}{\sqrt{1+\tau ^2}}) \ d\theta. 
		\end{aligned}
		\label{eq:integrand}
	\end{equation}
We will show that, as a function of $w$,  the integrand $H$ 
is minimized at $w_{\psi}(\theta)$
as given in the  proposition statement. 
First, we obtain the derivative of $H$ with respect to $w$: 
	\begin{equation*}
		\begin{aligned}
			H'(w) =  	 
			& \exp(-\frac{1}{2}\frac{(\theta - \Phi^{-1}(\alpha(1-w)))^2}{1+\tau ^2})\frac{1}{\sqrt{1+\tau ^2}}\frac{\alpha}{\exp(-\frac{1}{2}(\Phi^{-1}(\alpha(1-w)))^2)}\\
			& - \exp(-\frac{1}{2} \frac{(\theta - \Phi^{-1}(1-\alpha w))^2}{1+\tau ^2})\frac{1}{\sqrt{1+\tau ^2}}\frac{\alpha}{\exp(-\frac{1}{2}(\Phi^{-1}(1-\alpha w))^2)}.
		\end{aligned}
	\end{equation*}
	Setting this to zero and simplifying shows that a critical point $w_\psi$ satisfies
	\begin{equation}
		{2\theta }/{\tau ^2} = \Phi^{-1}(w\alpha) - \Phi^{-1}((1-w)\alpha). 
		\label{eq:derivative}
	\end{equation}
Let the right side of (\ref{eq:derivative}) be $g(w)$.
	It's not difficult to verify that $g(w)$ a continuous and strictly increasing function of $w$, with range $(-\infty, \infty)$. Thus there is a unique solution $ w_{\psi}(\theta)$ to the equation above, $w_{\psi}(\theta) = g^{-1}({2\theta / \tau ^2})$, which is a continuous and strictly increasing function of $\theta$. Since $H'(w)$ is continuous on $(0,1)$ with only one root, and $\lim_{w \rightarrow 0} \hspace{0.2cm}  H'(w)= -\infty$, $\lim_{w \rightarrow 1} \hspace{0.2cm}  H'(w) = \infty$,
	then $H(w)$ is minimized by $w_{\psi}(\theta)$. Therefore $w_{\psi}(\theta)$ minimizes the Bayes risk, and $C_{w_{\psi}}$ is the Bayes-optimal procedure among all CRPs. 
\end{proof}

\begin{proof}[Proof of Lemma \ref{lemma:incint}]
$C_w(y)$ can be written as $C_w(y) = \{ \theta : y< \theta - \sigma l(\theta) \; \textrm{and} \; \theta  - \sigma u(\theta) < y \}$. Letting $f_1(\theta) =  \theta  - \sigma u(\theta)$, $f_2(\theta) =  \theta - \sigma l(\theta)$, we first prove that $C_w(y)$ can also be written as $C_w(y) = \{ \theta : f_2^{-1}(y)<\theta \; \textrm{and} \; \theta  < f_1^{-1}(y) \}$.
	Note that both $\Phi^{-1}$ and $w(\theta)$ are continuous nondecreasing functions. Therefore $f_1(\theta) = \theta - \Phi^{-1} ( 1-\alpha w(\theta))$ is a strictly increasing continuous function, with 
$\lim_{\theta \to -\infty} \theta - \Phi^{-1} ( 1-\alpha w(\theta)) = -\infty$
and 
$\lim_{\theta \to +\infty} \theta - \Phi^{-1} ( 1-\alpha w(\theta))= +\infty.$
	Hence, $f_1^{-1}$ exists, and is a strictly increasing continuous function with range $(-\infty, \infty)$. Thus $f_1(\theta) < y$ can also be expressed as $\theta  < f_1^{-1}(y)$. Similarly, $y < f_2(\theta) $ can also be expressed as $ f_2^{-1}(y) < \theta$. Next, in order to show that $C_w(y)$  is an interval, we need to show that $f_2^{-1}(y) < f_1^{-1}(y)$.  To see this, we only need to show 
	\begin{equation*}
		\theta - \sigma \Phi^{-1}(1-\alpha w(\theta)) < \theta - \sigma \Phi^{-1}(\alpha(1-w(\theta))),
		\label{eq:UL}
	\end{equation*}
	or that $\Phi^{-1}(\alpha w(\theta)) <  \Phi^{-1}(1-\alpha(1-w(\theta)))$. This follows since $\Phi^{-1}(x)$ is a strictly increasing function. Thus $C_w(y) = \{ \theta : f_2^{-1}(y)<\theta  < f_1^{-1}(y) \}$, which is an interval.
\end{proof}

\begin{proof}[Proof of Lemma \ref{lemma:tincint}]
	The proof is basically the same as the proof of Lemma 2.2. We only need to replace $y$ with $\bar y$, $\sigma$ with $s/\sqrt{n}$, and the $z$-quantiles with  $t$-quantiles, and then use the same logic as in the proof of Lemma 2.2. 
\end{proof}

The proof of Proposition \ref{prop:aopttint} requires the following 
lemma that bounds the width
of the FAB $t$-interval:
\begin{lemma*} 
	The width of $C_{w_{\psi }}(\bar y, s^2)$ satisfies
	\begin{equation}
		|C_{w_{\psi }}(\bar y, s^2) |  < |\bar y-\mu| + \tfrac{s}{\sqrt{n}}(  |t(\alpha/2)|+|t(1-\alpha/2)|), 
		\label{bound}
	\end{equation}
	where $t$-quantiles are those of the $t_q$-distribution. 
\end{lemma*}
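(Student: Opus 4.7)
The plan is to split the width $\theta^U - \theta^L = \tfrac{s}{\sqrt{n}}[\,t_{1-\alpha w_\psi(\theta^U)} - t_{\alpha(1-w_\psi(\theta^L))}\,]$ into an ``upper'' contribution $U_* := \theta^U - \bar y$ and a ``lower'' contribution $-L_* := \bar y - \theta^L$, and to bound each against either $\tfrac{s}{\sqrt{n}}\,t_{1-\alpha/2}$ or $|\bar y - \mu|$, depending on where $\theta^L$ and $\theta^U$ sit relative to $\mu$. The key structural input is that $w_\psi(\mu) = 1/2$ and that $w_\psi$ is strictly increasing, so that $\theta \gtrless \mu \iff w_\psi(\theta) \gtrless 1/2$. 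For the optimal $w$-function defined by (\ref{eqn:wtopt}), the identity $w_\psi(\mu) = 1/2$ follows from symmetry: at $\theta = \mu$ the noncentrality parameter $\lambda$ vanishes, and using $t_{1-q} = -t_q$ together with $F_0(-x) = 1 - F_0(x)$ shows the integrand in (\ref{eqn:wtopt}) is invariant under $w \mapsto 1-w$, so a short first-derivative computation analogous to the one in the proof of Proposition \ref{prop:C} identifies $w = 1/2$ as the unique minimizer.

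Given these facts, the argument proceeds by case analysis on the position of $\mu$ relative to $(\theta^L, \theta^U)$. In the straddling case $\theta^L \leq \mu \leq \theta^U$, monotonicity of $w_\psi$ gives $w_\psi(\theta^U) \geq 1/2 \geq w_\psi(\theta^L)$, and monotonicity of the $t$-quantile function then yields $t_{1-\alpha w_\psi(\theta^U)} \leq t_{1-\alpha/2}$ and $t_{\alpha(1-w_\psi(\theta^L))} \geq t_{\alpha/2}$, so the width is at most $2\tfrac{s}{\sqrt{n}}\,t_{1-\alpha/2}$; because $\theta^L < \theta^U$ precludes $w_\psi(\theta^L) = w_\psi(\theta^U) = 1/2$, at least one of the two inequalities is strict. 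In the case $\mu < \theta^L$ we have $w_\psi(\theta^U) > 1/2$, so $U_* \leq \tfrac{s}{\sqrt{n}}\,t_{1-\alpha/2}$, while $\theta^L > \mu$ gives $-L_* = \bar y - \theta^L < \bar y - \mu \leq |\bar y - \mu|$; summing yields width $< |\bar y - \mu| + \tfrac{s}{\sqrt{n}}\,t_{1-\alpha/2}$. The case $\mu > \theta^U$ is symmetric. In every case, the bound $|\bar y - \mu| + \tfrac{s}{\sqrt{n}}(|t_{\alpha/2}| + |t_{1-\alpha/2}|) = |\bar y - \mu| + 2\tfrac{s}{\sqrt{n}}\,t_{1-\alpha/2}$ dominates strictly, since $t_{1-\alpha/2} > 0$.

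The main obstacle is verifying the two structural properties of $w_\psi$ used above --- namely, $w_\psi(\mu) = 1/2$ and strict monotonicity --- for the $t$-interval case (for the $z$-case they are already in Proposition \ref{prop:C}). The symmetry argument sketched above delivers $w_\psi(\mu) = 1/2$, but strict monotonicity requires a more careful analysis of the first-order condition coming from (\ref{eqn:wtopt}), which involves the noncentral $t$ CDF and is not entirely transparent. Everything downstream of these two facts is clean case-checking using monotonicity of the $t$-quantile function and elementary inequalities on $U_*$ and $-L_*$.
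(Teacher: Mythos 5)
Your proposal is correct and follows essentially the same route as the paper's proof: a case analysis on where the endpoints sit relative to $\mu$, driven by $w_\psi(\mu)=1/2$, strict monotonicity of $w_\psi$, and monotonicity of the $t_q$-quantile function, yielding the strict bound exactly as in the paper. The ``main obstacle'' you flag is moot, however: in this lemma $w_\psi$ is not the noncentral-$t$ optimizer from (\ref{eqn:wtopt}) but the closed-form $z$-optimal function $w_\psi(\theta)=g^{-1}\bigl(2(\theta-\mu)/(\tau^2/\sigma)\bigr)$ of Proposition \ref{prop:C} --- the $w$-function used in (\ref{eqn:combo}) and Proposition \ref{prop:aopttint} --- for which $w_\psi(\mu)=1/2$ and strict monotonicity are immediate, so no symmetry or first-order analysis of the noncentral-$t$ objective is required.
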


\begin{proof}
	For notational convenience,
	for this proof and the proof of Proposition 3.1, we write $t_\alpha$ as $t(\alpha)$. By previous results, the endpoints $\theta^L$ and $\theta^U$ of $C_{w_{\psi}}(\bar y, s^2)$ are solutions to
	\begin{equation}\label{eq:endpoints}
		\begin{array}{l}
			\theta^U - \tfrac{s}{\sqrt{n}}t(1- \alpha w_{\psi}(\theta^U) )=\bar y  \\
			\theta^L - \tfrac{s}{\sqrt{n}}t(\alpha (1-w_{\psi }(\theta^L))) =\bar y.\\
		\end{array}
	\end{equation} 
	Here $w_{\psi }(\theta)$ is defined as $
			w_{\psi }(\theta)  = g^{-1} ( \tfrac{2(\theta-\mu)}{ {\tau}^2/\sigma} )$, where $g(w)  =  \Phi^{-1}(\alpha w) - \Phi^{-1}(\alpha(1-w) )$.
	At the upper endpoint, we have $w_{\psi}(\theta^U) = F((\bar y-\theta^U)/(s/\sqrt{n}))   /\alpha$,
	where $F$ is the CDF of the $t_q$-distribution.
When $\theta^U > \mu$, we have $w_{\psi}(\theta^U) > g^{-1}(0) = 1/2$. Thus $\theta^U < \bar y - \tfrac{s}{\sqrt{n}}t(\alpha/2)$. Also, $g^{-1} ( \tfrac{2(\theta^U-\mu)}{\tau^{\prime2}/\sigma}) < 1$, so $\theta^U > \bar y - \tfrac{s}{\sqrt{n}}t(\alpha)$. When $\theta^U < \mu$, $\bar y - \tfrac{s}{\sqrt{n}}t(\alpha/2)< \theta^U$. This implies that
	\begin{align*}
		&\bar y - \tfrac{s}{\sqrt{n}}t( \alpha)< \theta^U < \bar y  - \tfrac{s}{\sqrt{n}}t( \alpha /2 )          & \ \text{if} \ \theta^U > \mu\\
		&\bar y - \tfrac{s}{\sqrt{n}}t( \alpha /2)<  \theta^U <\mu          & \ \text{if} \  \theta^U < \mu. 
	\end{align*}
	Similarly we have	
	\begin{align*}
		& \mu <\theta^L < \bar y - \tfrac{s}{\sqrt{n}}t(1 - \alpha /2 )          & \ \text{if} \   \theta^L  > \mu\\
		& \bar y - \tfrac{s}{\sqrt{n}}t(1 - \alpha /2)<\theta^L < y - \tfrac{s}{\sqrt{n}}t(1 - \alpha )          & \ \text{if} \  \theta^L < \mu. 
	\end{align*}
	Therefore
	\begin{equation*}
		|C_{w_\psi}(\bar y,s ) | =\theta^U - \theta^L < |\bar y-\mu| + \tfrac{s}{\sqrt{n}}(  |t(\alpha/2)|+|t(1-\alpha/2)|).
	\end{equation*}
\end{proof}

\begin{proof}[Proof of Proposition \ref{prop:aopttint}] 
That $C_{w_{\hat \psi}}$ is a $1-\alpha$ CIP follows by construction 
of the interval and that $\hat \psi$ is independent of $\bar Y$ and $S^2$. 
To prove the convergence of the risk, we denote the endpoints of the oracle CIP $C_{w_\psi}$ as $\theta^U$ and $\theta^L$, which are the solutions to
	\begin{equation*} 
		\begin{array}{l}
			\theta^U - \tfrac{\sigma}{\sqrt{n}}\Phi^{-1}(1- \alpha w_{\psi}(\theta^U) )=\bar Y  \\
			\theta^L - \tfrac{\sigma}{\sqrt{n}}\Phi^{-1}( \alpha (1-w_{\psi }(\theta^L))) =\bar Y.\\
		\end{array}
	\end{equation*} 
We denote the endpoints of $C_{w_{\hat \psi}}$ as $\theta^U_q$ and $\theta^L_q$, which  are the solutions to
	\begin{equation*} 
		\begin{array}{l}
			\theta^U_q - \tfrac{S}{\sqrt{n}}t( 1-\alpha w_{\hat \psi}(\theta^U_q) )=\bar Y  \\
			\theta^L_q - \tfrac{S}{\sqrt{n}}t( \alpha (1-w_{\hat \psi }(\theta^L_q))) =\bar Y.\\
		\end{array}
	\end{equation*} 
We first prove that $|C_{w_{\hat \psi}}|-|C_{w_{ \psi}}| = (\theta^U_q - \theta^U) + (\theta^L - \theta^L_q) \stackrel{a.s.}{\rightarrow } 0$ as $q \to \infty$ for each fixed $\bar Y$.
	We can write the upper endpoints as $\theta^U = G(\psi, \bar Y, \sigma^2)$, and $\theta^U_q = G_q(\hat \psi, \bar Y, S^2)$, where $G$ and $G_q$ are continuous functions of their parameters. The difference between $G$ and $G_q$ is that the former is obtained based on $z$-quantiles, while the later is based on $t$-quantiles.   
	We have
	\begin{align}
		|\theta^U_q - \theta^U| &=|G_q(\hat \psi, \bar Y, S^2) - G(\psi, \bar Y, \sigma^2)| \\ 
		& \leq |G_q(\hat \psi, \bar Y, S^2) - G(\hat \psi, \bar Y, S^2)| + |G(\hat \psi, \bar Y, S^2) - G(\psi, \bar Y, \sigma^2)|. 	
		\label{twocomp}
	\end{align}  
The first term in 
 (\ref{twocomp}) 
converges to zero because the convergence of $G_q \to G$ is uniform, 
and the second term converges to zero because 
$(\hat \psi, S^2) \stackrel{a.s.}{\rightarrow } (\psi,\sigma^2)$. 
Elaborating on the convergence of the first term, 
note that $G_q$ is a monotone sequence of continuous functions: 
Given $q_2 > q_1$, we have $t_{q_2}(1-\alpha w) < t_{q_1}(1- \alpha w)$. Hence $\theta - \tfrac{S}{\sqrt{n}}t_{q_2}(1-\alpha w_{\hat \psi}(\theta)) > \theta - \tfrac{S}{\sqrt{n}}t_{q_1}(1-\alpha w_{\hat \psi}(\theta))$. Therefore $G_{q_2}(\hat \psi, \bar Y, S^2) < G_{q_1}(\hat \psi, \bar Y, S^2)$, and so
by Dini's theorem, $G_q \rightarrow  G$ uniformly on 
a compact set of $(\hat \psi,S^2)$ values.  
Since
 $(\hat \psi, S^2) \stackrel{a.s.}{\rightarrow } (\psi,\sigma^2)$, 
with probability one there is an integer $Q$ such that 
when $q> Q$, $|\hat \psi -\psi| \leq c_1$ and $|S^2 - \sigma^2| \leq c_2$
for any to positive constants $c_1$ and $c_2$. 
Thus, $G_q$ converges to $G$ uniformly on this  compact set and the first 
term in (\ref{twocomp}) converges to zero.

Now we show the expected width converges to the oracle width by integrating over $\bar Y$. 
This is done by finding a 
dominating function for $| C_{w_{\hat\psi}}(\bar Y, S^2)|$
and applying the dominated convergence theorem.  
By the previous lemma we know that 
	\begin{equation*}
		|C_{w_{\hat \psi }}(\bar Y, S^2) |  < |\bar Y|+ |\hat \mu| + \tfrac{S}{\sqrt{n}}(  |t(\alpha/2)|+|t(1-\alpha/2)|).
		\label{bound2}
	\end{equation*}
	Note that $|t(\alpha/2)|+|t(1-\alpha/2)| < |t_1(\alpha/2)|+|t_1(1-\alpha/2)|$, where $t_1$ is the $t$-quantile with one degree of freedom. Similar to the argument earlier in this proof, given two constants $c_1, c_2 > 0$, we can find a $Q$ such that when $q > Q$, we have $|\hat \mu |<| \mu| + c_1$ and $S^2 < \sigma^2 + c_2$ a.s.. Now we have an dominating function for $|C_{w_{\hat \psi }}(\bar Y, S^2) |$
	\begin{equation*}
 |C_{w_{\hat \psi }}(\bar Y, S^2) | < \bar W(\bar Y, S^2,\hat \psi ) = |\bar Y| + |\mu| + c_1 + \tfrac{\sqrt{\sigma^2+c_2}}{\sqrt{n}}(  |t_1(\alpha/2)|+|t_1(1-\alpha/2)|).
	\end{equation*}
	Since $|\bar Y|$ is a folded normal random variable with finite mean, it's easy to see that this dominating function is integrable. Therefore, by dominated convergence theorem we have $
		\lim_{q\rightarrow \infty} \Exp{ |  C_{w_{\hat\psi}}|} =
		\Exp{ |  C_{w_{\psi}}|}$.

\end{proof}


\bibliography{paper}

\end{document}